\theoremstyle{theorem}
\newtheorem{theorem}{Theorem}
\theoremstyle{lemma}
\newtheorem{lemma}{Lemma}
\theoremstyle{definition}
\newtheorem{definition}{Definition}
\theoremstyle{assumption}
\newtheorem{assumption}{Assumption}
\theoremstyle{problem}
\newtheorem{problem}{Problem}
\theoremstyle{example}
\theoremstyle{proposition}
\newtheorem{proposition}{Proposition}
\theoremstyle{corollary}
\newtheorem{corollary}{Corollary}
\newtheorem{remark}{Remark}
\newcommand{\Input}{\textbf{Input:} $r$, $\epsilon$, $\overline{f}$, $\lambda_{i2}$, $c_{i}$, $\overline{T}_{ij}$, $\|\fbm{x}^{d}_{ij}\|\ \forall j\in\mathcal{N}_{i}(0)$}
\newcommand{\Output}{\textbf{Output:} $K_{i}(t)$, $B$ and $D$}
\newcommand{\fig}[1]{Figure~\ref{#1}}
\newcommand{\sect}[1]{Section~\ref{#1}}
\newcommand{\eq}[1]{Equation~(\ref{#1})}
\newcommand{\ass}[1]{Assumption~\ref{#1}}
\newcommand{\asses}[2]{Assumptions~\ref{#1}-\ref{#2}}
\newcommand{\prop}[1]{Proposition~\ref{#1}}
\newcommand{\lem}[1]{Lemma~\ref{#1}}
\newcommand{\rem}[1]{Remark~\ref{#1}}
\newcommand{\prob}[1]{Problem~\ref{#1}}
\newcommand{\theo}[1]{Theorem~\ref{#1}}
\newcommand{\defi}[1]{Definition~\ref{#1}}
\newcommand{\eqrefs}[2]{(\ref{#1})-(\ref{#2})}
\newcommand{\fbm}[1]{\mathbf{#1}}
\newcommand{\dbm}[1]{\dot{\fbm{#1}}}
\newcommand{\ddbm}[1]{\ddot{\fbm{#1}}}
\newcommand{\tbm}[1]{\fbm{#1}^\mathsf{T}}
\newcommand{\tfbm}[1]{\bm{#1}^\mathsf{T}}
\newcommand{\tilbm}[1]{\tilde{\fbm{#1}}}
\newcommand{\ttilbm}[1]{\tilde{\fbm{#1}}^\mathsf{T}}
\newcommand{\dtbm}[1]{\dot{\fbm{#1}}^\mathsf{T}}
\newcommand{\hfbm}[1]{\hat{\bm{#1}}}
\newcommand{\dfbm}[1]{\dot{\bm{#1}}}
\newcommand{\bbm}[1]{\overline{\fbm{#1}}}
\newcommand{\tbbm}[1]{\overline{\fbm{#1}}^\mathsf{T}}
\newcommand{\tilfbm}[1]{\tilde{\bm{#1}}}
\newcommand{\Sat}{\text{Sat}}
\begin{document}

\title{Connectivity-Preserving Swarm Teleoperation Over A Tree Network With Time-Varying Delays}

\author{Yuan~Yang, Yang Shi,~\IEEEmembership{Fellow,~IEEE} and Daniela Constantinescu,~\IEEEmembership{Member,~IEEE}\thanks{The authors are with the Department of Mechanical Engineering, University of Victoria, Victoria, BC V8W 2Y2 Canada (e-mail: yangyuan@uvic.ca; yshi@uvic.ca; danielac@uvic.ca).}}

\maketitle

\begin{abstract}
 
A teleoperated swarm must follow the unpredictable commands of its human operator while remaining connected. When the swarm communications are limited by distance and affected by delays, both the user input and the transmission delays endanger the connectivity of the swarm. This paper presents a constructive control strategy that overcomes both threats. The strategy modulates the intra-swarm couplings and the damping injected to each slave in the swarm based on a customized potential. Lyapunov-based set invariance analysis proves that the proposed explicit gain updating law limits the impact of the operator input and preserves the initial tree connectivity of a delay-free swarm. Further augmentation with stricter selection of control gains robustifies the design to time-varying delays in intra-swarm communications. The paper also establishes the input-to-state stability of a teleoperated time-delay swarm under the proposed dynamic control. Experiments validate connectivity maintenance and synchronization during time-delay swarm teleoperation with the proposed control.

\end{abstract}

\IEEEpeerreviewmaketitle

\section{Introduction}\label{sec:introduction}
Fully autonomous multi-robot systems~(MRS-s) have been extensively studied because they are more robust and flexible than single-robot systems~\cite{Schwager2011Proceedings, Kumar2011IJRR}. Compared to autonomous MRS-s, semi-autonomous teleoperated swarms are better suited for complex interactions in unpredictable environments~\cite{Paolo2012RAM} because human operators partially control them. The master may be connected to the swarm through velocity-like variables~\cite{Paolo2011RSS, Paolo2012TRO}, virtual kinematic points~\cite{Lee2013TMECH}, or a virtual rigid body~\cite{Schwager2016ICRA}, and may steer bearing formations~\cite{Paolo2012IJRR}, fly an arbitrary number of aerial robots with collision avoidance~\cite{Schwager2016ICRA}, or exchange forces between a human and a system of unmanned aerial vehicles~\cite{Heinrich2017IJRR}. Different haptic cues can improve maneuverability and perceptual sensitivity~\cite{Paolo2013TMECH}, while time-varying density functions can increase the agility of human-swarm adaptive interaction for optimal coverage control~\cite{Egerstedt2015TRO}. 

Because most robots have limited communication range, the distributed synchronization of MRS-s must also guarantee their connectivity. Most existing work preserves the connectivity of fully autonomous MRS-s during coordination. In particular, decentralized algebraic connectivity estimation~\cite{Sabattini2013IJRR, Paolo2013IJRR, Sabattini2013TRO} helps maintain global connectivity. For kinematic MRS-s, gradient-based controllers derived from unbounded~\cite{Egerstedt2007TRO, Dimos2007TAC, Dimos2008TRO, Zavlanos2008TRO} or bounded~\cite{Amir2010TAC, Amir2013TAC, Dixon2015TCNS} potentials preserve the local connectivity and account for disturbances~\cite{Dong2016Automatica}, Lipschitz nonlinearities~\cite{Ren2016TAC} and obstacles~\cite{Dixon2012TAC, Dimos2017TAC}. Topics of recent interest include intermittent connectivity~\cite{Hollinger2012TRO, Zavlanos2017TAC}, strong connectivity in directed graphs~\cite{Spong2017TAC}, robustness and invariance of connectivity maintenance with additional control terms~\cite{Dimos2017SIAM}, and the trade-offs among bounded controls, connectivity maintenance and additional control objectives~\cite{Sabattini2017TRO}. For dynamic MRS-s, the minimization of the total energy yields controls that preserve local connectivity during the sychronization of double-integrators~\cite{Su2010SCL, Su2011Automatica}, and during the leader-follower coordination of double integrators~\cite{Dong2013Automatica, Dong2014TAC, Su2015Automatica, Ai2016Automatica} and Euler-Lagrange~(EL) networks~\cite{Ren2012SCL, Dong2017IJRNC}. Robust controllers suppress disturbances in double-integrator rendezvous~\cite{Hu2017TCNS}, flocking~\cite{Dong2015Automatica} and formation tracking~\cite{Hu2018Automatica}.

For semi-autonomous MRS-s, passivity-based strategies maintain both global~\cite{Paolo2011RSS, Paolo2013ICRA} and local~\cite{Lee2013TMECH} connectivity. Global connectivity relies on gradient-based controls provided by a potential of the algebraic connectivity estimated by~\cite{Freeman2010Automatica}. Hence, the accuracy and rate of convergence of the connectivity estimates impact the safety of the teleoperated swarm~\cite{Dimos2014Automatica, Sabattini2015Cybernetics}. As discussed in~\cite{Sabattini2017TRO}, the connectivity estimation errors foil the ability to mitigate perturbations of global connectivity control even for first-order MRS-s. Local connectivity relies on gradient-based controls provided by a potential of displacements. Displacements between virtual kinematic points guarantee only the local connectivity of the virtual system~\cite{Lee2013TMECH}. The  control Lyapunov function validating the feasibility of the human-swarm interaction controls in~\cite{Egerstedt2015ACC} provides no explicit connectivity-preserving swarm teleoperation control strategy. 

A key feature of teleoperated swarms is that they are driven by human operators and, thus, are semi-autonomous. As external commands, the user inputs differ from the disturbances of autonomous systems and require a novel treatment in control. This paper proposes a new methodology to approach the user forces, which retains their driving role to the extent to which they do not endanger the connectivity of the swarm.

Without loss of generality, the paper assumes only one informed slave that has been passively connected to the master regardless of their distance~\cite{Lee2013TMECH}. It then focuses on maintaining all interaction links in the delay-free tree network of the slave swarm commanded by the user. Set invariance analysis shows that the connectivity of the swarm can be maintained through proper regulation of the energy of the swarm, as quantified by a customized potential of inter-robot distances. Structural controllability renders this energy provably upper-bounded by the local information of all slaves. Then, a transformation of the delay-free swarm dynamics into a first-order representation with state-dependent mismatches facilitates the introduction of a dynamic control that distributively regulates the energy of the teleoperated swarm. The dynamic control dominates the state-dependent mismatches and limits the user-injected energy by modulating the couplings between, and the damping injected to, slave robots according to their distances. Lyapunov energy analysis provably confines the impact of the unpredictable user command on the connectivity of the slave swarm to a safe domain.  

An augmentation of the proposed dynamic strategy then overcomes the threat of time-varying delays in the intra-swarm communications through stricter selection of the control gains. In time-delay swarms, the slaves move during the time interval from when they send out their position signals to when their neighbours receive them. To account for these delay-induced position mismatches, the augmented strategy (i) imposes stricter constraints on the inter-slave distances and (ii) develops a Lyapunov-Krasovskii functional to measure the impact on connectivity both of the user command and of the time-varying transmission delays. Set invariance analysis proves that the connectivity of the time-delay slave swarm can be guaranteed by the augmented dynamic control with certain conditions. An algorithm then verifies the feasibility of these conditions and, with them, of the proposed controller design. To the authors' best knowledge, the proposed dynamic strategy is the first to rigorously establish connectivity-preserving teleoperation of a time-delay swarm based on the structural controllability of a tree network. 

Lastly, the paper establishes input-to-state stability (ISS) of the time-delay teleoperated swarm. Because the model of the human operator is unknown, their command transmitted from the master side is unpredictable. Therefore, the proposed dynamic control stabilizes the time-delay teleoperated swarm subsystem with the transmitted user command as the input. The velocities of, and position errors between, the slaves in the tree network define the state of the slave swarm. Then, ISS swarm teleoperation provides an invariant set and a globally attractive set for measuring the robust position synchronization of the slave swarm under the perturbation of the transmitted user input. Specifically, the proposed dynamic strategy: (i) estimates the impact of the operator on the stability of the slave swarm using local position errors; (ii) updates the control gains to confine this impact to a domain safe for the robust position synchronization of the slave swarm; and (iii) maximally applies the operator command to the swarm. Experiments illustrate that, compared to the design in~\cite{Lee2013TMECH}, the proposed dynamic strategy can preserve the connectivity of teleoperated swarms with time-delayed information transmissions.

Preliminary results of some of the material from the paper has been reported in~\cite{Yuan_IROS2019}. Beyond the conference submission, this paper aims to tackle the threats to the swarm connectivity posed by not only the user perturbation but also the time-varying transmission delays. More specifically, the paper has four key contributions: (i) it reformulates the connectivity preservation problem of swarm teleoperation with time-varying transmission delays; (ii) it then augments the initial strategy with feasibility analysis to stabilize and to preserve the connectivity of a time-delay slave swarm; (iii) it further establishes the input-to-state stability of a teleoperated time-delay swarm with the proposed control; and (iv) it also validates, through experiments, that the new strategy enables the operator to guide the motion of a swarm while preventing time-varying delays from endangering swarm connectivity.

\section{Problem Formulation}\label{sec: problem formulation}
Let a swarm teleoperation system have one master robot and $N$ slave robots. An operator commands the group of slaves to a desired location by operating the master. To simplify the design without loss of generality, let the swarm have one informed slave, passively connected to the master robot regardless of their distance~\cite{Lee2013TMECH}. Then, this paper will preserve the connectivity of the slave swarm under the user command transmitted through the master-informed slave connection. 

Let the teleoperated slave swarm be a network of $n$-degree-of-freedom~($n$-DOF) Euler-Lagrange~(EL) systems: 
\begin{equation}\label{equ1}
\begin{aligned}
\fbm{M}_{1}(\fbm{x}_{1})\ddbm{x}_{1}+\fbm{C}_{1}(\fbm{x}_{1},\dbm{x}_{1})\dbm{x}_{1}=&\fbm{u}_{1}+\fbm{f}\textrm{,}\\
\fbm{M}_{s}(\fbm{x}_{s})\ddbm{x}_{s}+\fbm{C}_{s}(\fbm{x}_{s},\dbm{x}_{s})\dbm{x}_{s}=&\fbm{u}_{s}\textrm{,}
\end{aligned}
\end{equation}
where: the subscript $1$ indicates the informed slave that receives the time-varying user command $\fbm{f}$ from the master; and $s=2,\cdots,N$ index the remaining $N-1$ slaves. For each slave $i=1,\cdots,N$: $\fbm{x}_{i}$, $\dbm{x}_{i}$ and $\ddbm{x}_{i}$ are its position, velocity and acceleration vectors; $\fbm{M}_{i}(\fbm{x}_{i})$ and $\fbm{C}_{i}(\fbm{x}_{i},\dbm{x}_{i})$ are its matrices of inertia and of Coriolis and centrifugal effects; and $\fbm{u}_{i}$ is its control force to be designed. The EL dynamics~\eqref{equ1} have the following properties:
\begin{enumerate}[label=P.\arabic*]
\item \label{P1}
The inertia matrices $\fbm{M}_{i}(\fbm{x}_{i})$ are symmetric, positive definite and uniformly bounded by: $\lambda_{i1}\fbm{I}\preceq\fbm{M}_{i}(\fbm{x}_{i})\preceq\lambda_{i2}\fbm{I}$ for any $\fbm{x}_{i}\in\mathbb{R}^{n}$, where $\lambda_{i2}>\lambda_{i1}>0$;
\item \label{P2}
$\dbm{M}_{i}(\fbm{x}_{i})-2\fbm{C}_{i}(\fbm{x}_{i},\dbm{x}_{i})$ are skew-symmetric;
\item \label{P3}
$\fbm{C}_{i}(\fbm{x}_{i},\fbm{y}_{i})$ are linear in $\fbm{y}_{i}$, and there exist $c_{i}>0$ such that $\|\fbm{C}_{i}(\fbm{x}_{i},\fbm{y}_{i})\fbm{z}_{i}\|\leq c_{i}\|\fbm{y}_{i}\|\|\fbm{z}_{i}\|$, $\forall \fbm{x}_{i}, \fbm{y}_{i}, \fbm{z}_{i}\in\mathbb{R}^{n}$.
\end{enumerate}  

Let all slaves in the swarm have the same communication radius $r$. Then, slaves $i$ and $j$ can exchange information at $t\geq 0$ only if their distance is strictly smaller than $r$, i.e., $\|\fbm{x}_{ij}\|=\|\fbm{x}_{i}(t)-\fbm{x}_{j}(t)\|<r$, $\forall i,j\in\{1,\cdots,N\}$. At $t\geq 0$, slaves $i$ and $j$ are adjacent to each other iff they exchange information, i.e., iff their bidirectional communication link $(i,j)$ exists. Because all slaves have communication radius $r$, this paper will maintain certain links $(i,j)$ of the teleoperated swarm by properly constraining $\|\fbm{x}_{ij}(t)\|$ for all time $t\geq 0$.  

\begin{remark}\label{rem1}
\normalfont
This paper starts by preserving connectivity for a teleoperated swarm with no delay, similarly to existing work~\cite{Lee2013TMECH}. Therefore, this section presents the dependency on distance of the inter-slave communications only for a delay-free swarm. \sect{sec: time delay} will reformulate the distance dependency and extend the controller design to maintain the connectivity of a time-delay swarm.
\end{remark}

The information exchanges among the slave robots of a teleoperated swarm can be modelled as an undirected graph $\mathcal{G}(t)=\{\mathcal{V},\mathcal{E}(t)\}$. The vertex set $\mathcal{V}=\{1,\cdots,N\}$ collects all slaves in the swarm. The edge set $\mathcal{E}(t)\subset\{(i,j)\in\mathcal{V}\times\mathcal{V}\}$ includes all inter-slave communication links. By definition, slaves $i$ and $j$ are adjacent at time~$t$, i.e., $(i,j)\in\mathcal{E}(t)$, iff they exchange information at time~$t$. For each slave $i\in\mathcal{V}$, its neighbourhood $\mathcal{N}_{i}(t)=\{j\in\mathcal{V}\ |\ (i,j) \in \mathcal{E}(t)\}$ collects all slaves adjacent to it at time~$t$. In $\mathcal{G}(t)$, a path between vertices $i$ and $j$ is a sequence of distinct vertices $i,a,b,\cdots,j$ such that consecutive vertices are adjacent. By definition, the graph $\mathcal{G}(t)$ is connected iff there exists a path between every two distinct vertices. Further, $\mathcal{G}(t)$ is a tree if every two vertices are connected by exactly one path, as shown in~\fig{fig1a}.

Given a tree $\mathcal{G}(t)$, the associated weighted adjacency matrix $\fbm{A}=[a_{ij}]$ is defined by: $a_{ij}=a_{ji}>0$ if $(i,j)\in\mathcal{E}(t)$, and $a_{ij}=0$ otherwise. The weighted Laplacian matrix $\fbm{L}=[l_{ij}]$ is defined by $l_{ij}=\sum_{k\in\mathcal{N}_{i}(t)}a_{ik}$ if $j=i$, and $l_{ij}=-a_{ij}$ otherwise. For $a_{ij}\in\{0,1\}$, the associated Laplacian becomes an unweighted Laplacian matrix $\bbm{L}$. Let an orientation of $\mathcal{G}(t)$ define an oriented graph $\mathcal{G}^{\ast}(t)$, and label each oriented edge $(i,j)$ as $e_{k}$, $k=1,\cdots,N-1$, with weight $w(e_{k})=a_{ij}=a_{ji}$, as illustrated in~\fig{fig1b}. The associated incidence matrix of $\mathcal{G}^{\ast}(t)$ is $\fbm{D}=[d_{hk}]$ with $d_{hk}=1$ if vertex $h$ is the head of edge $e_{k}$, $d_{hk}=-1$ if $h$ is the tail of $e_{k}$, and $d_{hk}=0$ otherwise. The edge Laplacian of $\mathcal{G}^{\ast}(t)$ is $\fbm{L}_{e}=\tbm{D}\fbm{D}$. 

The following lemmas from~\cite{Egerstedt2010Princeton} will facilitate the proof of~\lem{lem1} in~\sect{sec: set invariance}.
\begin{enumerate}[label=L.\arabic*]
\item \label{L1}
The second smallest eigenvalue $\lambda_{L}$ of the unweighted Laplacian $\bbm{L}$ is positive, $\lambda_{L}>0$.
\item \label{L2}
The set of nonzero eigenvalues of $\fbm{L}_{e}$ is equal to the set of nonzero eigenvalues of the unweighted Laplacian $\bbm{L}$.
\item \label{L3}
The weighted Laplacian $\fbm{L}$ admits the decomposition $\fbm{L}=\fbm{D}\fbm{W}\tbm{D}$, where $\fbm{W}$ is an $(N-1)\times (N-1)$ diagonal matrix with $w(e_{k})$, $k=1,\cdots,N-1$, on the diagonal.
\end{enumerate} 

\begin{figure}[!hbt]
\centering
\subfigure[The tree network $\mathcal{G}(t)$ of a teleoperated slave swarm.]{
	\label{fig1a}
	\includegraphics[width=.45\columnwidth ,height=2.5cm]{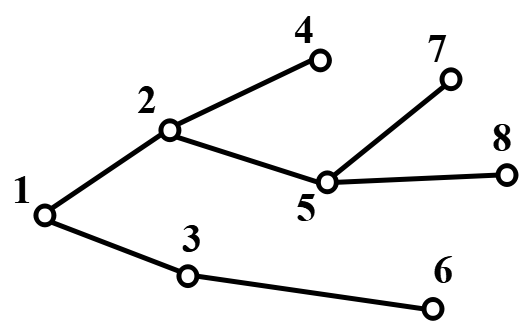}}\quad
\subfigure[An orientation $\mathcal{G}^{\ast}(t)$ of the undirected tree network $\mathcal{G}(t)$.]{
	\label{fig1b}
	\includegraphics[width=.45\columnwidth ,height=2.5cm]{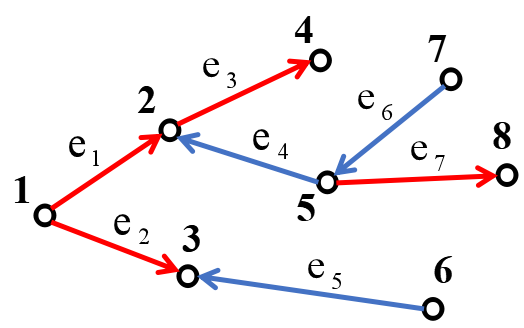}}
\caption{The tree network $\mathcal{G}(t)$ and an orientation $\mathcal{G}^{\ast}(t)$ of a slave swarm with dynamics~\eqref{equ1}.}
\label{fig1}
\end{figure}

Further, the paper adopts the following assumptions on the initial configuration of the system and on the user force.
\begin{assumption}\label{ass1}
The initial connectivity of the slave swarm is a tree $\mathcal{G}(0)$, and each pair of initially adjacent slaves is strictly within their communication distance, i.e., $\forall (i,j)\in\mathcal{E}(0) \Rightarrow \|\fbm{x}_{ij}(0)\|< \hat{r}=r-\epsilon$ for some $\epsilon>0$.
\end{assumption} 
\begin{assumption}\label{ass2}
The user command is bounded by $\|\fbm{f}\|\leq\overline{f}$.
\end{assumption}
\begin{assumption}\label{ass3}
The transmission of information from slave $i$ to slave $j$ is affected by the time-varying delay $T_{ij}(t)$, upper-bounded by $T_{ij}(t)\leq\overline{T}_{ij}$.
\end{assumption}
\begin{assumption}\label{ass4}
Each slave $i$ is initially at rest, and has been at rest longer than the maximum transmission delay from it to its neighbours $j$, i.e., $\dbm{x}_{i}(\tau)=\fbm{0}\ \forall\tau\in[-\overline{T}_{i},0]$ and $\forall i=1,\cdots,N$, where $\overline{T}_{i}=\max\limits_{j\in\mathcal{N}_{i}(0)}\left(\overline{T}_{ij}\right)$.
\end{assumption}

\begin{remark}\label{rem2}
\normalfont 
\ass{ass1} guarantees the initial tree network connectivity $\mathcal{G}(0)$ for any connected swarm~\cite{Egerstedt2010Princeton}. The distance condition $\|\fbm{x}_{ij}(0)\|<\hat{r}$ on all pairs of initially adjacent slaves $(i,j)\in\mathcal{E}(0)$ accounts for the inertia of the EL swarm network, as in the connectivity preservation of autonomous second-order MRS-s~\cite{Su2010SCL}. \ass{ass2} can be guaranteed regardless of unpredictable user actions, by saturating the connection of the informed slave to the master, as in~\sect{sec: experiments}. In \ass{ass3}, the bound on the time-varying delays determines the stabilizing damping for time-delay swarms, as in~\cite{Nuno2009IJRR}. \ass{ass4} facilitates the connectivity analysis for time-delay swarms in~\theo{theorem2}. 
\end{remark}

Then, the paper addresses the following connectivity-preserving swarm teleoperation problem.
\begin{problem}\label{prob1}
Find distributed control laws to drive the teleoperated slave swarm~\eqref{equ1} with~\asses{ass1}{ass4} such that:
\begin{enumerate}
\item[1.]
The velocities of all, and distance between any two, slaves are bounded in the presence of the user command, i.e., $\{\dbm{x}_{i}, \dbm{x}_{j}, \fbm{x}_{i}-\fbm{x}_{j}\}\in\mathcal{L}_{\infty}$, $\forall i,j=1,\cdots,N$ when $\fbm{f}\neq\fbm{0}$;
\item[2.]
All slaves are synchronized in the absence of user force, i.e., $\{\dbm{x}_{i}, \dbm{x}_{j}, \fbm{x}_{i}-\fbm{x}_{j}\}\to\fbm{0}$, $\forall i,j=1,\cdots,N$ when $\fbm{f}=\fbm{0}$;
\item[3.]
All interaction links of the initial network $\mathcal{G}(0)$ are maintained, i.e., $(i,j)\in\mathcal{E}(0) \Rightarrow(i,j)\in\mathcal{E}(t)$ $\forall t\geq 0$, and the connectivity of the slave swarm $\mathcal{G}(t)$ is preserved.
\end{enumerate}
\end{problem}

\begin{remark}\label{rem3}
\normalfont The first two objectives are similar to those of interactive robotic systems~\cite{Lee2010TRO}. Bounded distances between slaves under an operator-applied force indicate that the user can control the slave swarm by commanding the informed slave. Synchronization in the absence of the user command implies that 
swarm teleoperation is a more general MRS problem than the synchronization of autonomous MRS-s. The third objective indicates that the proposed strategy maintains a teleoperated swarm connected by preserving an initial spanning tree of it. Switching spanning trees for connectivity maintenance of teleoperated swarms are a topic of future work.
\end{remark}

\begin{remark}\label{rem4} 
\normalfont The user command $\fbm{f}$ makes the teleoperated swarm~\eqref{equ1} semi-autonomous, and distinguishes its control from that of autonomous and leader-follower MRS-s in two ways.
\begin{enumerate}
\item Compared to the external inputs of autonomous MRS-s, which are disturbances to be rejected, $\fbm{f}$ drives the slave swarm to the desired configuration. The swarm controller should not reject it. Yet, $\fbm{f}$ may endanger the connectivity of the swarm, unintentionally or maliciously. Therefore, the gist of the proposed strategy is to let $\fbm{f}$ drive the swarm to the extent to which $\fbm{f}$ does not threaten its connectivity.
\item Compared to the external input of leader-follower systems, which generates the leader control signal, $\fbm{f}$ is an unpredictable command because it is determined by the unpredictable intentions of the human operator. In contrast, followers are controlled to track the predictable motion of the leader and leader-follower MRS-s are completely self-controlled.
\end{enumerate}
\end{remark}

The proposed solution to \prob{prob1} needs the following definitions and lemma to investigate the robust synchronization of a teleoperated time-delay swarm in~\sect{sec: iss}.  

A function $\alpha:\mathbb{R}_{\geq 0}\mapsto\mathbb{R}_{\geq 0}$ is: of class $\mathcal{K}$ if it is continuous, strictly increasing and $\alpha(0)=0$; of class $\mathcal{K}_{\infty}$ if it is of class $\mathcal{K}$ and unbounded; of class $\mathcal{L}$ if it decreases to zero as its argument increases to $+\infty$. A function $\beta:\mathbb{R}_{\geq 0}\times\mathbb{R}_{\geq 0}\mapsto\mathbb{R}_{\geq 0}$ is of class $\mathcal{KL}$ if it is of class $\mathcal{K}$ in its first argument and of class $\mathcal{L}$ in its second argument. Let $\mathcal{C}([-T,0];\mathbb{R}^{m})$ denote the set of continuous functions defined on $[-T,0]$ and with values in $\mathbb{R}^{m}$. For any essentially bounded function $\bm{\phi}_{t}\in\mathcal{C}([-T,0];\mathbb{R}^{m})$, let 
\begin{align*}
|\bm{\phi}_{t}|_{T}=\sup\limits_{-T\leq\tau\leq 0}\|\bm{\phi}(t+\tau)\|
\end{align*}
and $|\bm{\phi}_{t}|_{a}$ be a norm of $\bm{\phi}$ such that:
\begin{align*}
\gamma_{a}\|\bm{\phi}(t)\|\leq |\bm{\phi}_{t}|_{a}\leq\overline{\gamma}_{a}|\bm{\phi}_{t}|_{T}
\end{align*}
for some positive $\gamma_{a}$ and $\overline{\gamma}_{a}$.

\begin{definition}\label{iss}
~\cite{Jiang2006SCL} The time-delay nonlinear system:
\begin{align*}
\dfbm{\phi}(t)=&\varphi(\bm{\phi}_{t},\fbm{f}(t))\textrm{,}\quad t\geq 0\ a.e.\textrm{,}\\
\bm{\phi}(\tau)=&\bm{\xi}(\tau)\textrm{,}\quad \tau\in[-T,0]\textrm{,}
\end{align*}
with $\bm{\phi}_{t}:[-T,0]\mapsto\mathbb{R}^{m}$ the standard function $\bm{\phi}_{t}(\tau)=\bm{\phi}(t+\tau)$, and $T$ the maximum involved delay, $\varphi:\mathcal{C}([-T,0];\mathbb{R}^{m})\times\mathbb{R}^{l}\mapsto\mathbb{R}^{m}$, and $\bm{\xi}\in\mathcal{C}([-T,0];\mathbb{R}^{m})$, is ISS with input $\fbm{f}(t)\in\mathbb{R}^{l}$ and state $\bm{\phi}(t)\in\mathbb{R}^{m}$ if there exist functions $\alpha\in\mathcal{K}$ and $\beta\in\mathcal{KL}$ such that:
\begin{align*}
\|\bm{\phi}(t)\|\leq\beta\left(|\bm{\xi}|_{T},t\right)+\alpha\left(\sup\limits_{0\leq\tau\leq t}\|\fbm{f}(\tau)\|\right)\textrm{,}\quad \forall t\geq 0\textrm{.}
\end{align*}
\end{definition}
\begin{enumerate}[label=L.4]
\item \label{L4}
~\cite{Jiang2006SCL} The time-delay nonlinear system in~\defi{iss} is ISS if there exists a Lyapunov-Krasovskii functional $V: \mathcal{C}([-T,0];\mathbb{R}^{m})\mapsto\mathbb{R}_{\geq 0}$, functions $\alpha_{1},\alpha_{2}\in\mathcal{K}_{\infty}$, $\alpha_{3},\alpha_{4}\in\mathcal{K}$ such that:
\begin{enumerate}
\item
$\alpha_{1}(\|\bm{\phi}\|)\leq V\leq\alpha_{2}(|\bm{\phi}_{t}|_{a})$, $\forall\bm{\phi}_{t}\in\mathcal{C}([-T,0];\mathbb{R}^{m})$;
\item
$\dot{V}\leq -\alpha_{3}(|\bm{\phi}_{t}|_{a})$, $\forall\bm{\phi}_{t}\in\mathcal{C}([-T,0];\mathbb{R}^{m})$, $\fbm{f}(t)\in\mathbb{R}^{l}:|\bm{\phi}_{t}|_{a}\geq\alpha_{4}(\|\fbm{f}(t)\|)$.
\end{enumerate}
\end{enumerate}

\section{Main Results}\label{sec: main results}

This section first introduces a customized potential function for set invariance analysis and proposes a key lemma on the structural controllability of a tree network. Next, it presents the dynamic strategy for preserving the tree network, and implicitly the connectivity, of a teleoperated swarm without and with time-varying delays. Lastly, it proves that the proposed dynamic control renders the time-delay slave swarm ISS and thus guarantees its robust synchronization.

\subsection{Set Invariance by Energy Bounding}\label{sec: set invariance}

By~\ass{ass1} and the third objective in~\prob{prob1}, the connectivity of the slave swarm~\eqref{equ1} can be maintained by rendering invariant the edge set of $\mathcal{G}(0)$, $\mathcal{E}(t)=\mathcal{E}(0)\ \forall t\geq 0$. Because the communication links between slaves $(i,j)\in\mathcal{E}(0)$ are constrained by distance, the following function verifies the distance constraints:
\begin{equation}\label{equ2}
\psi(\|\fbm{x}_{ij}\|)=\frac{P\|\fbm{x}_{ij}\|^{2}}{r^{2}-\|\fbm{x}_{ij}\|^{2}+Q}\textrm{,}
\end{equation}
where $P$ and $Q$ are positive constants to be designed. The function $\psi(\|\fbm{x}_{ij}\|)$ is continuous, positive definite and strictly increasing with respect to $\|\fbm{x}_{ij}\|\in[0,r]$~\cite{Su2010SCL}. Therefore, it can quantify the energy stored in the slave network by:
\begin{equation}\label{equ3}
V_{p}=\frac{1}{2}\sum^{N}_{i=1}\sum_{j\in\mathcal{N}_{i}(0)}\psi(\|\fbm{x}_{ij}\|)\textrm{.}
\end{equation}

The following proposition offers a path to upper-bounding the swarm energy as quantified by functions~\eqref{equ2} and~\eqref{equ3}:
\begin{proposition}\label{prop1} 
For any network with \ass{ass1}, its energy $V_{p}$ in~\eqref{equ3} satisfies:
\begin{align*}
V_{p}(0)+\Delta<\frac{P\overline{r}^{2}}{r^{2}-\overline{r}^{2}+Q}=\psi_{\max}
\end{align*}
for any $\Delta>0$ and any positive $P$, $Q$ and $\overline{r}=r-\kappa\epsilon$ with $\kappa\in[0,1)$ satisfying:
\begin{equation}\label{equ4}
\begin{aligned}
\omega_{1}=&\left(r^{2}-\hat{r}^{2}\right)\overline{r}^{2}-(N-1)\left(r^{2}-\overline{r}^{2}\right)\hat{r}^{2}>0\textrm{,}\\
\omega_{2}=&\omega_{1}+\Big(\overline{r}^{2}-(N-1)\hat{r}^{2}\Big)Q>0\textrm{,}\\
\omega_{3}=&P\omega_{2}-\left(r^{2}-\overline{r}^{2}+Q\right)\left(r^{2}-\hat{r}^{2}+Q\right)\Delta>0\textrm{.}
\end{aligned}
\end{equation}
\end{proposition}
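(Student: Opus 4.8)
The plan is to bound $V_{p}(0)$ from above using the tree structure of $\mathcal{G}(0)$ together with the initial distance constraint of \ass{ass1}, and then to show that the hypothesis $\omega_{3}>0$ is exactly the algebraic statement that this upper bound, plus the margin $\Delta$, stays below $\psi_{\max}$. No properties of the EL dynamics or of the Laplacian are needed; the argument rests only on the tree edge count and the monotonicity of $\psi$.

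First I would exploit that $\mathcal{G}(0)$ is a tree on the $N$ vertices in $\mathcal{V}$, so it has exactly $N-1$ edges. Since the double sum defining $V_{p}(0)$ in~\eqref{equ3} counts each undirected edge $(i,j)\in\mathcal{E}(0)$ twice, the factor $\tfrac{1}{2}$ collapses it to $V_{p}(0)=\sum_{(i,j)\in\mathcal{E}(0)}\psi(\|\fbm{x}_{ij}(0)\|)$. By \ass{ass1}, $\|\fbm{x}_{ij}(0)\|<\hat{r}$ for every $(i,j)\in\mathcal{E}(0)$, and because $\psi$ is strictly increasing on $[0,r]$, each summand is strictly smaller than $\psi(\hat{r})=P\hat{r}^{2}/(r^{2}-\hat{r}^{2}+Q)$. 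Summing over the $N-1$ edges yields the strict bound $V_{p}(0)<(N-1)\psi(\hat{r})$, so it suffices to establish $(N-1)\psi(\hat{r})+\Delta\leq\psi_{\max}$ under the stated hypotheses.

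Next I would reduce this scalar inequality to $\omega_{3}>0$ by direct algebra. Writing $A=r^{2}-\overline{r}^{2}+Q$ and $B=r^{2}-\hat{r}^{2}+Q$ (both positive because $Q>0$ and $r\geq\overline{r}>\hat{r}$ under $\kappa\in[0,1)$), clearing the positive denominator $AB$ converts $\psi_{\max}-(N-1)\psi(\hat{r})\geq\Delta$ into $P\big(\overline{r}^{2}B-(N-1)\hat{r}^{2}A\big)\geq\Delta\,AB$. Expanding $\overline{r}^{2}B-(N-1)\hat{r}^{2}A$ and grouping the $Q$-independent terms as $\omega_{1}=(r^{2}-\hat{r}^{2})\overline{r}^{2}-(N-1)(r^{2}-\overline{r}^{2})\hat{r}^{2}$ and the $Q$-linear terms as $\big(\overline{r}^{2}-(N-1)\hat{r}^{2}\big)Q$ identifies the bracket exactly as $\omega_{2}$. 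Hence the inequality is precisely $P\omega_{2}-\Delta\,AB\geq 0$, and the hypothesis $\omega_{3}=P\omega_{2}-(r^{2}-\overline{r}^{2}+Q)(r^{2}-\hat{r}^{2}+Q)\Delta>0$ delivers the strict form $(N-1)\psi(\hat{r})+\Delta<\psi_{\max}$. Chaining with the strict edge bound gives $V_{p}(0)+\Delta<(N-1)\psi(\hat{r})+\Delta<\psi_{\max}$, as claimed.

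The conditions $\omega_{1}>0$ and $\omega_{2}>0$ play a supporting role: they ensure the coefficient $\omega_{2}$ multiplying $P$ is positive, so a sufficiently large but still positive $P$ can always force $\omega_{3}>0$, confirming the three hypotheses are simultaneously satisfiable. The only genuine obstacle is the bookkeeping in expanding $\overline{r}^{2}B-(N-1)\hat{r}^{2}A$ and recognizing its recombination into $\omega_{1}$ and $\omega_{2}$, but this becomes routine once the abbreviations $A$ and $B$ are introduced.
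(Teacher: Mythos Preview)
Your proof is correct and follows essentially the same approach as the paper: bound $V_{p}(0)$ by $(N-1)\psi(\hat r)$ using the tree edge count, the initial distance constraint, and the monotonicity of $\psi$, then verify algebraically that $\omega_{3}>0$ is precisely the condition $\psi_{\max}-(N-1)\psi(\hat r)>\Delta$. The paper organizes the same computation slightly differently (it writes out $\psi_{\max}-V_{p}(0)$ and simplifies to $P\omega_{2}/[(r^{2}-\overline r^{2}+Q)(r^{2}-\hat r^{2}+Q)]$), but the substance is identical.
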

\begin{proof}
\ass{ass1} and the property that $\psi(\|\fbm{x}_{ij}\|)$ in~\eqref{equ2} is strictly increasing on $\|\fbm{x}_{ij}\|\in [0,r]$ imply that:
\begin{align*}
V_{p}(0)<\frac{1}{2}\sum^{N}_{i=1}\sum_{j\in\mathcal{N}_{i}(0)}\frac{P\hat{r}^{2}}{r^{2}-\hat{r}^{2}+Q}=\frac{(N-1)P\hat{r}^{2}}{r^{2}-\hat{r}^{2}+Q}\textrm{,}
\end{align*} 
where $N-1$ is the number of communication links in $\mathcal{G}(0)$. A sufficiently small $\kappa\in[0,1)$ makes $\omega_{1}>0$. Lastly, after selecting small enough $Q>0$ to satisfy $\omega_{2}>0$, a sufficiently large $P>0$ makes $\omega_{3}>0$ and guarantees that:
\begin{align*}
&\psi_{\max}-V_{p}(0)> \frac{P\overline{r}^{2}}{r^{2}-\overline{r}^{2}+Q}-\frac{(N-1)P\hat{r}^{2}}{r^{2}-\hat{r}^{2}+Q}\\
=&\frac{\Big(\left(r^{2}-\hat{r}^{2}+Q\right)\overline{r}^{2}-(N-1)\left(r^{2}-\overline{r}^{2}+Q\right)\hat{r}^{2}\Big)P}{\left(r^{2}-\overline{r}^{2}+Q\right)\left(r^{2}-\hat{r}^{2}+Q\right)}\\
=&\frac{P\omega_{2}}{\left(r^{2}-\overline{r}^{2}+Q\right)\left(r^{2}-\hat{r}^{2}+Q\right)}>\Delta\textrm{.}
\end{align*}
\end{proof}

\prop{prop1} contributes to examining the distance constraint on every link $(i,j)\in\mathcal{E}(0)$ by \prop{prop2} below.
\begin{proposition}\label{prop2}
If $P$, $Q$ and $\overline{r}=r-\kappa\epsilon$ satisfy condition~\eqref{equ4} and: 
\begin{equation}\label{equ5}
V_{p}(\tau)\leq V_{p}(0)+\Delta\textrm{,}\quad \forall\tau\in[0,t]\textrm{,}
\end{equation}
then $\|\fbm{x}_{ij}(t)\|<\overline{r}$ for every $(i,j)\in\mathcal{E}(0)$.
\end{proposition}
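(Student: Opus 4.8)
The plan is to combine the global energy bound of \prop{prop1} with the hypothesis \eqref{equ5} and then isolate the contribution of a single edge to $V_{p}$. First I would chain the two bounds to obtain, for all $\tau\in[0,t]$,
\begin{align*}
V_{p}(\tau)\leq V_{p}(0)+\Delta<\frac{P\overline{r}^{2}}{r^{2}-\overline{r}^{2}+Q}=\psi_{\max}=\psi(\overline{r})\textrm{,}
\end{align*}
where the last equality merely records that $\psi_{\max}$ is the value of the potential \eqref{equ2} evaluated at $\|\fbm{x}_{ij}\|=\overline{r}$.

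Next, since $V_{p}$ in \eqref{equ3} is a sum over the fixed initial edge set $\mathcal{E}(0)$ of the nonnegative terms $\psi(\|\fbm{x}_{ij}\|)$, each individual edge potential is dominated by the total, $\psi(\|\fbm{x}_{ij}(\tau)\|)\leq V_{p}(\tau)$ for every $(i,j)\in\mathcal{E}(0)$. Combined with the previous display this gives $\psi(\|\fbm{x}_{ij}(\tau)\|)<\psi(\overline{r})$ on every edge.

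To convert this potential inequality into the distance bound $\|\fbm{x}_{ij}(t)\|<\overline{r}$, I would use that $\psi$ is continuous and strictly increasing on $[0,r]$ and argue by continuity. By \ass{ass1}, at $\tau=0$ every edge satisfies $\|\fbm{x}_{ij}(0)\|<\hat{r}<\overline{r}$, the latter because $\kappa\in[0,1)$ forces $\overline{r}=r-\kappa\epsilon>r-\epsilon=\hat{r}$. Since the inter-slave distances evolve continuously in time, if some edge satisfied $\|\fbm{x}_{ij}(t)\|\geq\overline{r}$, the intermediate value theorem would produce a first instant $t^{\ast}\in(0,t]$ at which $\|\fbm{x}_{ij}(t^{\ast})\|=\overline{r}$ while every edge still obeys $\|\fbm{x}_{kl}(t^{\ast})\|\leq\overline{r}$. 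At $t^{\ast}$ the crossing edge contributes $\psi(\overline{r})=\psi_{\max}$ and every other term is nonnegative, so $V_{p}(t^{\ast})\geq\psi_{\max}$, contradicting the strict bound $V_{p}(t^{\ast})<\psi_{\max}$ just derived. Hence no edge ever reaches $\overline{r}$, and $\|\fbm{x}_{ij}(t)\|<\overline{r}$ for all $(i,j)\in\mathcal{E}(0)$.

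The one delicate point, and the step I expect to require care, is precisely this last implication: the naive deduction that $\psi(\|\fbm{x}_{ij}\|)<\psi(\overline{r})$ forces $\|\fbm{x}_{ij}\|<\overline{r}$ is legitimate only where $\psi$ is monotone, i.e. on $[0,\sqrt{r^{2}+Q})$, and $\psi$ becomes singular as $\|\fbm{x}_{ij}\|\to\sqrt{r^{2}+Q}$. The first-crossing-time argument sidesteps this by keeping every distance inside $[0,\overline{r}]$ up to $t^{\ast}$, a region on which $\psi$ is well defined, nonnegative and strictly increasing, so that both the term-by-term domination of $V_{p}$ and the inversion of $\psi$ are valid.
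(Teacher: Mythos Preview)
Your proof is correct and follows essentially the same contradiction argument as the paper: bound $V_{p}(\tau)<\psi_{\max}$ via \prop{prop1}, then observe that any edge reaching $\overline{r}$ would force $V_{p}\geq\psi_{\max}$. Your explicit first-crossing-time argument is in fact slightly more careful than the paper's version, which simply assumes $\|\fbm{x}_{ij}(t)\|=\overline{r}$ is the maximal edge length at time $t$ without addressing why no edge could have already exceeded $\overline{r}$.
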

\begin{proof}
By~\prop{prop1}, \eq{equ5} implies that:
\begin{align*}
V_{p}(\tau)<\psi_{\max}\textrm{,}\quad \forall \tau\in[0,t]\textrm{.}
\end{align*}
\ass{ass1} and the strict monotonicity of~\eqref{equ2} then lead to:
\begin{align*}
0\leq\psi(\|\fbm{x}_{ij}(0)\|)<\psi_{\max}\textrm{,}\quad \forall (i,j)\in\mathcal{E}(0)\textrm{.}
\end{align*}
Suppose that, at time $t$, $\|\fbm{x}_{ij}(t)\|=\overline{r}$ is the longest distance among all pairs of initially neighbouring slaves. It then follows that $\psi(\|\fbm{x}_{ij}(t)\|)=\psi_{\max}$ and $0\leq\psi(\|\fbm{x}_{lm}(t)\|)\leq \psi_{\max}$ for any other $(l,m)\in\mathcal{E}(0)$, and further that:
\begin{align*}
V_{p}(t)=\psi(\|\fbm{x}_{ij}(t)\|)+\sum_{(l,m)\in\overline{\mathcal{E}}(0)}\psi(\|\fbm{x}_{lm}(t)\|)\geq \psi_{\max}\textrm{,}
\end{align*}
for $\overline{\mathcal{E}}(0)=\mathcal{E}(0)-\{(i,j)\}$, which contradicts~\eqref{equ5} and, thus, leads to the conclusion that $\|\fbm{x}_{ij}(t)\|<\overline{r}\ \forall (i,j)\in\mathcal{E}(0)$.
\end{proof}

\begin{remark}\label{rem5}
\normalfont
Propositions~\ref{prop1} and~\ref{prop2} use the customized potentials~\eqref{equ2} and~\eqref{equ3}  to verify the invariance~\cite{Blanchini1999Auto,Ames2017TAC} of the initial edge set $\mathcal{E}(0)$ of a teleoperated swarm. Because the potentials~\eqref{equ2} and~\eqref{equ3} depend on the inter-slave distances, this paper constrains the distance $\|\fbm{x}_{ij}\|$ between each pair of initially adjacent slaves $(i,j)\in\mathcal{E}(0)$ so as to render the edge set $\mathcal{E}(0)$ invariant for both delay-free and time-delay teleoperated swarms. The proposed control policy operates similarly to the strategy for connectivity maintenance of autonomous double-integrator multi-agent systems~\cite{Su2010SCL}. However, it preserves the connectivity of teleoperated swarms driven by unpredictable user forces, similarly to virtual point-based control~\cite{Lee2013TMECH}. Uniquely, the constructive design in this paper preserves the connectivity of teleoperated swarms even with time-varying delays in the inter-slave transmissions. 
\end{remark}

To bound the potential energy $V_{p}$ using only local slave information, define a surface $\fbm{s}_{i}$:
\begin{equation}\label{equ6}
\fbm{s}_{i}=\dbm{x}_{i}+\sigma\bm{\theta}_{i}\textrm{,}
\end{equation}
for every slave $i=1,\cdots,N$, with $\sigma>0$ and:
\begin{equation}\label{equ7}
\bm{\theta}_{i}=\sum_{j\in\mathcal{N}_{i}(0)}\nabla_{i}\psi(\|\fbm{x}_{ij}\|)
\textrm{,}
\end{equation}
where the gradient of $\psi(\|\fbm{x}_{ij}\|)$ with respect to $\fbm{x}_{i}$ is given by:
\begin{equation}\label{equ8}
\nabla_{i}\psi(\|\fbm{x}_{ij}\|)=\frac{2P\cdot\left(r^{2}+Q\right)}{\left(r^{2}-\|\fbm{x}_{ij}\|^{2}+Q\right)^{2}}\left(\fbm{x}_{i}-\fbm{x}_{j}\right)\textrm{.}
\end{equation}
Then, the dynamics of the teleoperated swarm~\eqref{equ1} become:
\begin{equation}\label{equ9}
\begin{aligned}
\fbm{M}_{1}(\fbm{x}_{1})\dbm{s}_{1}+\fbm{C}_{1}(\fbm{x}_{1},\dbm{x}_{1})\fbm{s}_{1}=&\sigma\fbm{\Delta}_{1}+\fbm{u}_{1}+\fbm{f}\textrm{,}\\
\fbm{M}_{s}(\fbm{x}_{s})\dbm{s}_{s}+\fbm{C}_{s}(\fbm{x}_{s},\dbm{x}_{s})\fbm{s}_{s}=&\sigma\fbm{\Delta}_{s}+\fbm{u}_{s}\textrm{,}
\end{aligned}
\end{equation}
where $s=2,\cdots,N$, and $\bm{\Delta}_{i}$ are state-dependent mismatches:
\begin{equation}\label{equ10}
\fbm{\Delta}_{i}=\fbm{M}_{i}(\fbm{x}_{i})\dfbm{\theta}_{i}+\fbm{C}_{i}(\fbm{x}_{i},\dbm{x}_{i})\bm{\theta}_{i}\ \forall i=1,\cdots,N\textrm{.}
\end{equation}

The following lemma will play a key role in the proof of connectivity maintenance in the remainder of the paper.
\begin{lemma}\label{lem1}
For the teleoperated swarm~\eqref{equ1} with the tree network $\mathcal{G}(0)$ and \asses{ass1}{ass2}, the following inequality holds:
\begin{equation}\label{equ11}
\begin{aligned}
\sum^{N}_{i=1}\tfbm{\theta}_{i}\bm{\theta}_{i}\geq \frac{4\lambda_{L}P}{r^{2}+Q} V_{p}\textrm{.}
\end{aligned}
\end{equation}
\end{lemma}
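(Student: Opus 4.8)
The plan is to recognize that the aggregate $\sum_{i=1}^{N}\tfbm{\theta}_{i}\bm{\theta}_{i}$ is a quadratic form in the stacked slave positions governed by a \emph{weighted} Laplacian of $\mathcal{G}(0)$, and then to bound it below by combining a pointwise comparison between the potential $\psi$ and the squared edge length with a spectral estimate that injects the unweighted algebraic connectivity $\lambda_{L}$. First I would introduce the edge weight $w_{ij}=\tfrac{2P(r^{2}+Q)}{(r^{2}-\|\fbm{x}_{ij}\|^{2}+Q)^{2}}$ so that, by~\eqref{equ8}, $\nabla_{i}\psi(\|\fbm{x}_{ij}\|)=w_{ij}(\fbm{x}_{i}-\fbm{x}_{j})$ and hence $\bm{\theta}_{i}=\sum_{j\in\mathcal{N}_{i}(0)}w_{ij}(\fbm{x}_{i}-\fbm{x}_{j})$ is precisely the action of the weighted Laplacian $\fbm{L}$ with $a_{ij}=w_{ij}$ on the positions. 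Stacking the $\fbm{x}_{i}$ into $\fbm{x}$ and invoking the decomposition $\fbm{L}=\fbm{D}\fbm{W}\tbm{D}$ from L.3, with $\fbm{W}=\diag(w_{ij})$, gives $\bm{\theta}=(\fbm{D}\fbm{W}\otimes\fbm{I})\fbm{z}$, where $\fbm{z}=(\tbm{D}\otimes\fbm{I})\fbm{x}$ collects the edge differences. Therefore
\[
\sum_{i=1}^{N}\tfbm{\theta}_{i}\bm{\theta}_{i}=\|\bm{\theta}\|^{2}=\fbm{z}^{\mathsf{T}}\!\left(\fbm{W}\fbm{L}_{e}\fbm{W}\otimes\fbm{I}\right)\!\fbm{z},\qquad \fbm{L}_{e}=\tbm{D}\fbm{D}.
\]

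I would then isolate two independent estimates. The pointwise bound uses $\|\fbm{x}_{ij}\|\in[0,r)$ and $r^{2}-\|\fbm{x}_{ij}\|^{2}+Q\leq r^{2}+Q$ to show $\psi(\|\fbm{x}_{ij}\|)\leq\tfrac{1}{2}w_{ij}\|\fbm{x}_{ij}\|^{2}$; summing over the $N-1$ tree edges and recalling that each edge is counted twice in~\eqref{equ3} yields $\fbm{z}^{\mathsf{T}}(\fbm{W}\otimes\fbm{I})\fbm{z}\geq 2V_{p}$. The spectral bound is where the tree hypothesis is essential: since $\mathcal{G}(0)$ is a tree, $\fbm{D}$ has full column rank and $\fbm{L}_{e}$ is positive definite, so by L.2 its eigenvalues coincide with the nonzero eigenvalues of $\bbm{L}$, whose smallest is $\lambda_{L}>0$ by L.1; thus $\lambda_{\min}(\fbm{L}_{e})=\lambda_{L}$. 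Together with the weight floor $w_{ij}\geq\tfrac{2P}{r^{2}+Q}=:w_{\min}$ (the minimum over $\|\fbm{x}_{ij}\|\in[0,r)$, attained at $\|\fbm{x}_{ij}\|=0$), a conjugation by $\fbm{W}^{1/2}$ gives $\fbm{W}\fbm{L}_{e}\fbm{W}\succeq w_{\min}\lambda_{L}\,\fbm{W}$.

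Finally I would chain the two estimates: $\sum_{i}\tfbm{\theta}_{i}\bm{\theta}_{i}=\fbm{z}^{\mathsf{T}}(\fbm{W}\fbm{L}_{e}\fbm{W}\otimes\fbm{I})\fbm{z}\geq w_{\min}\lambda_{L}\,\fbm{z}^{\mathsf{T}}(\fbm{W}\otimes\fbm{I})\fbm{z}\geq 2w_{\min}\lambda_{L}V_{p}=\tfrac{4\lambda_{L}P}{r^{2}+Q}V_{p}$, which is exactly~\eqref{equ11}. The main obstacle I anticipate is the spectral step, i.e.\ rigorously identifying $\lambda_{\min}(\fbm{L}_{e})$ with the unweighted connectivity $\lambda_{L}$ and then propagating it through the weighting $\fbm{W}$: this hinges on the positive definiteness of the edge Laplacian, which holds \emph{precisely} because the network is a tree. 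An equivalent vertex-space route would instead use $\fbm{L}^{2}\succeq\mu_{2}\fbm{L}$ with $\mu_{2}$ the second smallest eigenvalue of $\fbm{L}$ and $\mu_{2}\geq w_{\min}\lambda_{L}$, but then one must carefully restrict to the disagreement subspace orthogonal to $\mathbf{1}$ to handle the Laplacian null space, an annoyance the edge formulation avoids.
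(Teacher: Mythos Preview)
Your proof is correct and follows essentially the same route as the paper: both identify $\sum_{i}\tfbm{\theta}_{i}\bm{\theta}_{i}$ as a quadratic form governed by the edge Laplacian $\fbm{L}_{e}$ of the tree, invoke L.1--L.3 to get $\lambda_{\min}(\fbm{L}_{e})=\lambda_{L}$, and then bound the weight factor below by $2P/(r^{2}+Q)$. The only cosmetic difference is bookkeeping---the paper works with the weighted edge vector $\bbm{x}=(\fbm{W}\tbm{D}\otimes\fbm{I})\fbm{x}$ and bounds $\|\nabla_{i}\psi\|^{2}$ in terms of $\psi$ directly, whereas you use the unweighted $\fbm{z}=(\tbm{D}\otimes\fbm{I})\fbm{x}$ and split the estimate into a spectral step $\fbm{W}\fbm{L}_{e}\fbm{W}\succeq w_{\min}\lambda_{L}\fbm{W}$ and a pointwise step $\fbm{z}^{\mathsf{T}}(\fbm{W}\otimes\fbm{I})\fbm{z}\geq 2V_{p}$---but these are algebraically equivalent rearrangements of the same inequality.
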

\begin{proof}
The weighted adjacency matrix $\fbm{A}=[a_{ij}]$ associated with the tree $\mathcal{G}(0)$ is: 
\begin{align*}
a_{ij}=\begin{cases}
\frac{2P\cdot(r^{2}+Q)}{\left(r^{2}-\|\fbm{x}_{ij}\|^{2}+Q\right)^{2}}\quad &\text{if } j\in\mathcal{N}_{i}(0)\\
0\quad &\text{otherwise}
\end{cases}
\textrm{,}
\end{align*}
and the corresponding weighted Laplacian is $\fbm{L}=[l_{ij}]$ with:
\begin{align*}
l_{ij}=\begin{cases}
-a_{ij}\quad &\text{if } j\neq i\\
\sum_{k\in\mathcal{N}_{i}(0)}a_{ik}\quad &\text{else if } j=i 
\end{cases}
\textrm{.} 
\end{align*}

Letting $\fbm{l}_{i}$ be the $i$-th row of $\fbm{L}$, it follows that:
\begin{align*}
\sum_{j\in\mathcal{N}_{i}(0)}\nabla_{i}\psi(\|\fbm{x}_{ij}\|)=\left(\fbm{l}_{i}\otimes\fbm{I}_{n}\right)\fbm{x}\textrm{,}
\end{align*}
for $\fbm{x}=\left(\tbm{x}_{1}\ \cdots\ \tbm{x}_{N}\right)^\mathsf{T}$. After using the definition~\eqref{equ7} of $\bm{\theta}_{i}$, the left side of~\eqref{equ11} becomes:
\begin{align*}
\sum^{N}_{i=1}\tfbm{\theta}_{i}\bm{\theta}_{i}&=\sum^{N}_{i=1}\tbm{x}\left(\fbm{l}_{i}\otimes\fbm{I}_{n}\right)^\mathsf{T}\left(\fbm{l}_{i}\otimes\fbm{I}_{n}\right)\fbm{x}\\
&=\tbm{x}\left(\left(\sum^{N}_{i=1}\tbm{l}_{i}\fbm{l}_{i}\right)\otimes\fbm{I}_{n}\right)\fbm{x}=\tbm{x}\left(\tbm{L}\fbm{L}\otimes\fbm{I}_{n}\right)\fbm{x}\textrm{.}
\end{align*}
By~Lemma~\ref{L3} and the definition of $\fbm{W}$ in~\cite{Egerstedt2010Princeton}, the left-hand side of~\eqref{equ11} can be re-organized further as:
\begin{equation}\label{equ12}
\begin{aligned}
&\sum^{N}_{i=1}\tfbm{\theta}_{i}\bm{\theta}_{i}=\tbm{x}\left(\fbm{D}\fbm{W}\tbm{D}\fbm{D}\fbm{W}\tbm{D}\otimes\fbm{I}_{n}\right)\fbm{x}\\
=&\left(\left(\fbm{W}\tbm{D}\otimes\fbm{I}_{n}\right)\fbm{x}\right)^\mathsf{T}\left(\tbm{D}\fbm{D}\otimes\fbm{I}_{n}\right)\left(\left(\fbm{W}\tbm{D}\otimes\fbm{I}_{n}\right)\fbm{x}\right)\\
=&\tbbm{x}\left(\fbm{L}_{e}\otimes\fbm{I}_{n}\right)\bbm{x}\textrm{,}
\end{aligned}
\end{equation}
where $\bbm{x}=(\tbbm{x}_{1}\ \cdots\ \tbbm{x}_{N-1})^\mathsf{T}=(\fbm{W}\tbm{D}\otimes\fbm{I}_{n})\fbm{x}$ with $\bbm{x}_{k}=\nabla_{i}\psi(\|\fbm{x}_{ij}\|)$ for $e_{k}=(i,j)$, $k=1,\cdots,N-1$, stacks the weighted displacements between all pairs of adjacent slaves $(i,j)\in\mathcal{E}(0)$~\cite{Egerstedt2010Princeton}. 

Lemma~\ref{L1} in~\cite{Egerstedt2010Princeton} and \ass{ass1} imply that the second smallest eigenvalue $\lambda_{L}$ of the unweighted laplacian $\overline{\fbm{L}}$ is positive, i.e., $\lambda_{L}>0$. Further, $\fbm{L}_{e}$ is $(N-1)\times (N-1)$, because $\mathcal{G}(0)$ is a tree, and has smallest eigenvalue $\lambda_{L}$ by~Lemma~\ref{L2} in~\cite{Egerstedt2010Princeton}. Hence, the left side of~\eqref{equ13} can be lower-bounded by:
\begin{align*}
&\sum^{N}_{i=1}\tfbm{\theta}_{i}\bm{\theta}_{i}\geq  \lambda_{L}\tbbm{x}\bbm{x}=\lambda_{L}\sum_{(i,j)\in\mathcal{E}(0)}\Big\|\nabla_{i}\psi(\|\fbm{x}_{ij}\|)\Big\|^{2}\\
=&\sum_{(i,j)\in\mathcal{E}(0)}\frac{4\lambda_{L}P\cdot(r^{2}+Q)^{2}}{\left(r^{2}-\|\fbm{x}_{ij}\|^{2}+Q\right)^{3}}\psi(\|\fbm{x}_{ij}\|)\geq \frac{4\lambda_{L}P}{r^{2}+Q} V_{p}\textrm{.}
\end{align*} 
\end{proof}

\subsection{Delay-Free Slave Swarm}\label{sec: delay free}

Because, in a delay-free swarm, each slave~$i$ receives the information sent by its neighbours $j\in\mathcal{N}_{i}(0)$ if $\|\fbm{x}_{ij}(t)\|<r\ \forall t\geq 0$, $\psi(\|\fbm{x}_{ij}\|)$ in~\eqref{equ2} and $V_{p}$ in~\eqref{equ3} serve to design a connectivity-preserving swarm controller in this section.

Assume that all links $(i,j)\in\mathcal{E}(0)$ have been maintained during the time interval $[0,t)$, i.e., $\|\fbm{x}_{ij}(\tau)\|<r$ $\forall \tau\in[0,t)$ and $\forall (i,j)\in\mathcal{E}(0)$. This assumption implies that the position $\fbm{x}_{j}(t)$ of slave $j\in\mathcal{N}_{i}(0)$ can be employed in the control of slave $i$ at time $t$ to guarantee that $\|\fbm{x}_{ij}(t)\|<r$ $\forall (i,j)\in\mathcal{E}(0)$ by~\prop{prop2}. Then, connectivity maintenance can be proven by induction on time~\cite{Su2010SCL}. The control proposed to render the edge set $\mathcal{E}(0)$ positively invariant is:
\begin{equation}\label{equ13}
\fbm{u}_{i}=-K_{i}(t)\fbm{s}_{i}-D\dbm{x}_{i}-B\bm{\theta}_{i}\textrm{,}
\end{equation}  
where $K_{i}(t)$, $D$ and $B$ are positive gains to be determined. 

\begin{remark}\label{rem6}
\normalfont
After using the definition~\eqref{equ6} of $\fbm{s}_{i}$, the control $\fbm{u}_{i}$ becomes:
\begin{align*}
\fbm{u}_{i}=-\Big(\sigma K_{i}(t)+B\Big)\bm{\theta}_{i}-\Big(K_{i}(t)+D\Big)\dbm{x}_{i}\textrm{,}
\end{align*} 
where the first term is the coupling force and the second term is the injected damping force. The gain $K_{i}(t)$ is updated dynamically, according to the distances $\|\fbm{x}_{ij}\|$ between slave $i$ and its neighbours $j\in\mathcal{N}_{i}(0)$.
\end{remark}

The following Lyapunov candidate is proposed to investigate connectivity preservation:
\begin{equation}\label{equ14}
\begin{aligned}
V_{1}=\frac{1}{2(B+\sigma D)}\sum^{N}_{i=1}\tbm{s}_{i}\fbm{M}_{i}(\fbm{x}_{i})\fbm{s}_{i}+V_{p}\textrm{,}
\end{aligned}
\end{equation}
with $V_{p}$ defined in~\eqref{equ3}. After using Properties~\ref{P1} and~\ref{P2} of~\eqref{equ1} and~\ass{ass1}, the derivative of $V_{1}$ along the trajectories of~\eqref{equ9} in closed-loop with~\eqref{equ13} is:
\begin{align*}
\dot{V}_{1}=&\frac{1}{2(B+\sigma D)}\sum^{N}_{i=1}\left(\tbm{s}_{i}\dbm{M}_{i}(\fbm{x}_{i})\fbm{s}_{i}+2\tbm{s}_{i}\fbm{M}_{i}(\fbm{x}_{i})\dbm{s}_{i}\right)\\
&+\frac{1}{2}\sum^{N}_{i=1}\sum_{j\in\mathcal{N}_{i}(0)}\left(\dtbm{x}_{i}\nabla_{i}\psi(\|\fbm{x}_{ij}\|)+\dtbm{x}_{j}\nabla_{j}\psi(\|\fbm{x}_{ij}\|)\right)\\
=&\sum^{N}_{i=1}\frac{\sigma\tbm{s}_{i}\fbm{\Delta}_{i}-K_{i}(t)\tbm{s}_{i}\fbm{s}_{i}}{B+\sigma D}+\frac{\tbm{s}_{1}\fbm{f}}{B+\sigma D}\\
&-\sum^{N}_{i=1}\frac{\tbm{s}_{i}\left(D\dbm{x}_{i}+B\bm{\theta}_{i}\right)}{B+\sigma D}+\sum^{N}_{i=1}\sum_{j\in\mathcal{N}_{i}(0)}\dtbm{x}_{i}\nabla_{i}\psi(\|\fbm{x}_{ij}\|)\textrm{.}
\end{align*}
The assumption that $\mathcal{E}(\tau)=\mathcal{E}(0)$ for all $\tau\in [0,t)$ implies that $\bm{\theta}_{i}$ in~\eqref{equ7} can be used in the control $\fbm{u}_{i}$ at time~$t$ and, together with the definition~\eqref{equ6} of $\fbm{s}_{i}$, leads to:
\begin{align*}
&\tbm{s}_{i}(D\dbm{x}_{i}+B\bm{\theta}_{i})\\
=&D\dtbm{x}_{i}\dbm{x}_{i}+\sigma D\dtbm{x}_{i}\bm{\theta}_{i}+B\dtbm{x}_{i}\bm{\theta}_{i}+\sigma B\tfbm{\theta}_{i}\bm{\theta}_{i}\\
=&D\dtbm{x}_{i}\dbm{x}_{i}+\sigma B\tfbm{\theta}_{i}\bm{\theta}_{i}+(B+\sigma D)\sum_{j\in\mathcal{N}_{i}(0)}\dtbm{x}_{i}\nabla_{i}\psi(\|\fbm{x}_{ij}\|)\textrm{.}
\end{align*}
Thus, $\dot{V}_{1}$ becomes:
\begin{equation}\label{equ15}
\begin{aligned}
\dot{V}_{1}=&\sum^{N}_{i=1}\frac{\sigma\tbm{s}_{i}\bm{\Delta}_{i}-\sigma B\tfbm{\theta}_{i}\bm{\theta}_{i}}{B+\sigma D}+\frac{\tbm{s}_{1}\fbm{f}}{B+\sigma D}\\
&-\sum^{N}_{i=1}\frac{1}{B+\sigma D}\Big(K_{i}(t)\tbm{s}_{i}\fbm{s}_{i}+D\dtbm{x}_{i}\dbm{x}_{i}\Big)\textrm{.}
\end{aligned}
\end{equation}
The definition of $\bm{\theta}_{i}$ in~\eqref{equ7} further implies that:
\begin{equation}\label{equ16}
\begin{aligned}
\dfbm{\theta}_{i}=&\sum_{j\in\mathcal{N}_{i}(0)}\frac{8P\cdot(r^{2}+Q)\tbm{x}_{ij}\dbm{x}_{ij}\fbm{x}_{ij}}{\left(r^{2}-\|\fbm{x}_{ij}\|^{2}+Q\right)^{3}}\\
&+\sum_{j\in\mathcal{N}_{i}(0)}\frac{2P\cdot(r^{2}+Q)(\dbm{x}_{i}-\dbm{x}_{j})}{\left(r^{2}-\|\fbm{x}_{ij}\|^{2}+Q\right)^{2}}\textrm{,}
\end{aligned}
\end{equation}
and algebraic manipulations using Properties~\ref{P1} and~\ref{P3} of~\eqref{equ1} lead to:
\begin{equation}\label{equ17}
\begin{aligned}
&\tbm{s}_{i}\fbm{M}_{i}(\fbm{x}_{i})\dfbm{\theta}_{i}\leq \sum_{j\in\mathcal{N}_{i}(0)}2\left(\eta_{i}+\gamma_{i}\right)\left(\dtbm{x}_{i}\dbm{x}_{i}+\dtbm{x}_{j}\dbm{x}_{j}\right)\\
&+\sum_{j\in\mathcal{N}_{i}(0)}\Bigg(\frac{16\lambda^{2}_{i2}P^{2}\cdot(r^{2}+Q)^{2}\|\fbm{x}_{ij}\|^{4}}{\eta_{i}\cdot\left(r^{2}-\|\fbm{x}_{ij}\|^{2}+Q\right)^{6}}\tbm{s}_{i}\fbm{s}_{i}\\
&\quad \quad \quad \quad \quad +\frac{\lambda^{2}_{i2}P^{2}\cdot\left(r^{2}+Q\right)^{2}}{\gamma_{i}\cdot\left(r^{2}-\|\fbm{x}_{ij}\|^{2}+Q\right)^{4}}\tbm{s}_{i}\fbm{s}_{i}\Bigg)
\end{aligned}
\end{equation}
with $\eta_{i}>0$ and $\gamma_{i}>0$, and to:
\begin{equation}\label{equ18}
\begin{aligned}
&\tbm{s}_{i}\fbm{C}_{i}(\fbm{x}_{i},\dbm{x}_{i})\bm{\theta}_{i}\\
\leq &\sum_{j\in\mathcal{N}_{i}(0)}\Bigg(\frac{c^{2}_{i}P^{2}\cdot\left(r^{2}+Q\right)^{2}\|\fbm{x}_{ij}\|^{2}}{2\zeta_{i}\cdot\left(r^{2}-\|\fbm{x}_{ij}\|^{2}+Q\right)^{4}}\tbm{s}_{i}\fbm{s}_{i}+2\zeta_{i}\dtbm{x}_{i}\dbm{x}_{i}\Bigg)
\end{aligned}
\end{equation}
with $\zeta_{i}>0$. Hence, the impact of the state-dependent mismatch $\fbm{\Delta}_{i}$ in~\eqref{equ10} can be upper-bounded by:
\begin{equation}\label{equ19}
\begin{aligned}
\tbm{s}_{i}\fbm{\Delta}_{i}\leq&\sum_{j\in\mathcal{N}_{i}(0)}\Big(\Lambda_{ij1}(t)\tbm{s}_{i}\fbm{s}_{i}+2(\eta_{i}+\gamma_{i})\dtbm{x}_{j}\dbm{x}_{j}\\
&\quad \quad \quad \quad +2(\eta_{i}+\gamma_{i}+\zeta_{i})\dtbm{x}_{i}\dbm{x}_{i}\Big)\textrm{,}
\end{aligned}
\end{equation}
where: 
\begin{align*}
&\Lambda_{ij1}(t)=\frac{16\lambda^{2}_{i2}P^{2}\cdot(r^{2}+Q)^{2}\|\fbm{x}_{ij}\|^{4}}{\eta_{i}\cdot\left(r^{2}-\|\fbm{x}_{ij}\|^{2}+Q\right)^{6}}\\
&+\frac{\lambda^{2}_{i2}P^{2}\cdot\left(r^{2}+Q\right)^{2}}{\gamma_{i}\cdot\left(r^{2}-\|\fbm{x}_{ij}\|^{2}+Q\right)^{4}}+\frac{c^{2}_{i}P^{2}\cdot\left(r^{2}+Q\right)^{2}\|\fbm{x}_{ij}\|^{2}}{2\zeta_{i}\cdot\left(r^{2}-\|\fbm{x}_{ij}\|^{2}+Q\right)^{4}}\textrm{.}
\end{align*}
Because the user-injected energy is upper-bounded by:
\begin{equation}\label{equ20}
\tbm{s}_{1}\fbm{f}\leq \|\fbm{f}\|^{2}+\frac{1}{4}\tbm{s}_{1}\fbm{s}_{1}\textrm{,}
\end{equation} 
the following upper bound on $\dot{V}_{1}$ follows after substitution from~\eqref{equ19} and~\eqref{equ20} into~\eqref{equ15}:
\begin{equation}\label{equ21}
\begin{aligned}
\dot{V}_{1}\leq&\sum^{N}_{i=1}\frac{\varrho_{i}\|\fbm{f}\|^{2}-\widehat{K}_{i}(t)\tbm{s}_{i}\fbm{s}_{i}-\widehat{D}_{i}\dtbm{x}_{i}\dbm{x}_{i}-\sigma B\tfbm{\theta}_{i}\bm{\theta}_{i}}{B+\sigma D}\textrm{.}
\end{aligned}
\end{equation}  
In~\eqref{equ21}, $\varrho_{i}=1$ if $i=1$ and $\varrho_{i}=0$ otherwise, and:
\begin{equation}\label{equ22}
\begin{aligned}
&\widehat{K}_{i}(t)= K_{i}(t)-\sum_{j\in\mathcal{N}_{i}(0)}\sigma\Lambda_{ij1}(t)-\frac{\varrho_{i}}{4}\textrm{,}\\
&\widehat{D}_{i}=D-\sum_{j\in\mathcal{N}_{i}(0)}2\sigma\cdot(\eta_{i}+\gamma_{i}+\zeta_{i}+\eta_{j}+\gamma_{j})\textrm{.}
\end{aligned}
\end{equation}

The transformation of the swarm dynamics~\eqref{equ1} into~\eqref{equ9} by the surface~\eqref{equ6} is the first key step of the design proposed in this paper. This model order reduction technique has already served to address system uncertainties in adaptive control~\cite{Slotine2009TRO,Nuno2011TAC}, and to develop feedback $r$-passivity~\cite{Lee2011ICRA} to ensure passivity of a master-informed slave connection with kinematic dissimilarity. Uniquely, this paper uses the technique to preserve the connectivity of a teleoperated swarm without compensation of the system dynamics. The reduced-order swarm dynamics~\eqref{equ9} facilitate the design of the controller~\eqref{equ13} and the analysis of connectivity maintenance as follows. Connectivity preservation for a delay-free swarm with tree network $\mathcal{G}(0)$ is equivalent to rendering the set $\mathcal{E}(0)$ invariant. The invariance of $\mathcal{E}(0)$ is examined by the potential function $V_{p}$ as in~\eqref{equ5} in \prop{prop2}. An upper-bound on $V_{p}$ is provided by the Lyapunov candidate $V_{1}$ in~\eqref{equ14}, which also accounts for the inertia of the EL slaves through its term dependent on the kinetic energy of the dynamics~\eqref{equ9}. The dynamics~\eqref{equ9} in closed-loop with the designed control~\eqref{equ13} upper-bound $\dot{V}_{1}$ as in~\eqref{equ21}. Lastly, the bound~\eqref{equ21} on $\dot{V}_{1}$ is critical to quantifying the upper bound of $V_{1}$ and thus of $V_{p}$ in~\theo{theorem1}. 

Given~\lem{lem1}, the invariance of the edge set $\mathcal{E}(0)$ and, with it, coonectivity preservation for the delay-free teleoperated swarm~\eqref{equ1} is validated by the following theorem.
\begin{theorem}\label{theorem1}
The control~\eqref{equ13} maintains the connectivity of the teleoperated swarm~\eqref{equ1} with \asses{ass1}{ass2} if designed by:
\begin{enumerate}
\item[1. ]
choose $\rho$, $\sigma$, $\eta_{i}$, $\gamma_{i}$, $\zeta_{i}$ and $B$ heuristically;
\item[2. ]
set $D$ to make $\widehat{D}_{i}\geq 0$ in~\eqref{equ22};
\item[3. ]
let $\kappa=0$ and select $Q$ to make $\omega_{2}>0$ in~\eqref{equ4};
\item[4. ]
choose $P$ sufficiently large such that:
\begin{equation}\label{equ23}
P\geq \frac{\rho\cdot(r^{2}+Q)(B+\sigma D)}{4\sigma\lambda_{L}B}\textrm{,}
\end{equation}
and such that $\omega_{3}>0$ in~\eqref{equ4} with:
\begin{align*}
\Delta=\frac{1}{2(B+\sigma D)}\sum^{N}_{i=1}\lambda_{i2}\|\fbm{s}_{i}(0)\|^{2}+\frac{\overline{f}^{2}}{\rho\cdot(B+\sigma D)}\textrm{;}
\end{align*}
\item[5. ]
update $K_{i}(t)$ according to~\eqref{equ22} to guanratee:
\begin{equation}\label{equ24}
\widehat{K}_{i}(t)\geq \frac{1}{2}\rho\lambda_{i2}\textrm{.}
\end{equation}
\end{enumerate}
\end{theorem}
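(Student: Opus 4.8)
The plan is to reduce connectivity maintenance to the single scalar bound $V_{p}(t)\leq V_{p}(0)+\Delta$ for all $t\geq 0$, after which \prop{prop1} and \prop{prop2} finish the argument. Following the induction-on-time scheme of~\cite{Su2010SCL}, I would assume that every initial link persists on $[0,t)$, i.e. $\mathcal{E}(\tau)=\mathcal{E}(0)$ for $\tau\in[0,t)$, so that each $\bm{\theta}_{i}$ built from $\mathcal{N}_{i}(0)$ is legitimately available to the controller~\eqref{equ13} at time $t$, and then show that no link can break at $t$. The derivative bound~\eqref{equ21} already assembles the closed-loop effect of~\eqref{equ13}; the task is to turn it into a decay inequality for the composite Lyapunov function $V_{1}$ in~\eqref{equ14}.

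To do so I would feed in the five design conditions one at a time. Step~2 makes $\widehat{D}_{i}\geq 0$, so the damping term $-\widehat{D}_{i}\dtbm{x}_{i}\dbm{x}_{i}$ in~\eqref{equ21} is non-positive and may be discarded. Step~5 guarantees $\widehat{K}_{i}(t)\geq\tfrac{1}{2}\rho\lambda_{i2}$; combined with $\fbm{M}_{i}(\fbm{x}_{i})\preceq\lambda_{i2}\fbm{I}$ from Property~\ref{P1}, this yields $-\widehat{K}_{i}(t)\tbm{s}_{i}\fbm{s}_{i}\leq-\tfrac{\rho}{2}\tbm{s}_{i}\fbm{M}_{i}(\fbm{x}_{i})\fbm{s}_{i}$, reproducing $-\rho$ times the kinetic part of $V_{1}$. \lem{lem1} converts the remaining term $-\sigma B\sum_{i}\tfbm{\theta}_{i}\bm{\theta}_{i}$ into the potential feedback $-\tfrac{4\sigma B\lambda_{L}P}{r^{2}+Q}V_{p}$, and the large-gain condition~\eqref{equ23} in Step~4 is exactly what forces the coefficient $\tfrac{4\sigma\lambda_{L}BP}{(r^{2}+Q)(B+\sigma D)}$ to be at least $\rho$, so this term is bounded by $-\rho V_{p}$. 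Collecting these and using $\|\fbm{f}\|\leq\overline{f}$ from~\ass{ass2} gives the clean estimate $\dot{V}_{1}\leq-\rho V_{1}+\overline{f}^{2}/(B+\sigma D)$.

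From here the comparison lemma yields $V_{1}(t)\leq V_{1}(0)+\overline{f}^{2}/(\rho(B+\sigma D))$ for all $t\geq 0$, and bounding $V_{1}(0)$ through Property~\ref{P1} shows the right-hand side is at most $V_{p}(0)+\Delta$ for the $\Delta$ prescribed in Step~4. Since the kinetic term of $V_{1}$ is non-negative, $V_{p}(t)\leq V_{1}(t)\leq V_{p}(0)+\Delta$. With $\kappa=0$ and $Q$ chosen in Step~3 making $\omega_{2}>0$, and $P$ chosen in Step~4 making $\omega_{3}>0$, condition~\eqref{equ4} holds, so \prop{prop1} gives $V_{p}(0)+\Delta<\psi_{\max}$ and \prop{prop2} then forces $\|\fbm{x}_{ij}(t)\|<\overline{r}=r$ for every $(i,j)\in\mathcal{E}(0)$. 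Hence no initial link breaks at $t$, which closes the induction and preserves $\mathcal{E}(0)$, and with it the tree connectivity, for all $t\geq 0$.

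The main obstacle I anticipate lies in making the induction-on-time rigorous rather than in the algebra of~\eqref{equ21}. The instantaneous bound~\eqref{equ21} is only valid while $\mathcal{E}(\tau)=\mathcal{E}(0)$, so one must argue by continuity that the \emph{strict} inequality delivered by \prop{prop2} prevents any distance $\|\fbm{x}_{ij}\|$ from reaching $r$ at the boundary of the maintenance interval, thereby extending $\mathcal{E}(\tau)=\mathcal{E}(0)$ from the open interval $[0,t)$ to the closed interval $[0,t]$. Care is also needed to confirm that the heuristic Step~1 leaves enough freedom for Steps~2--5 to be satisfied simultaneously, i.e. that the gain conditions~\eqref{equ22}, \eqref{equ23} and~\eqref{equ24} are mutually compatible.
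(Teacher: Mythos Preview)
Your proposal is correct and follows essentially the same route as the paper's proof: use $\widehat{D}_{i}\geq 0$, $\widehat{K}_{i}(t)\geq\tfrac{1}{2}\rho\lambda_{i2}$, \lem{lem1} and~\eqref{equ23} to turn the bound~\eqref{equ21} into $\dot{V}_{1}\leq-\rho V_{1}+\overline{f}^{2}/(B+\sigma D)$, integrate to get $V_{p}(t)\leq V_{1}(t)\leq V_{p}(0)+\Delta$, and conclude via \prop{prop2}. Your treatment of the induction-on-time closure and the observation that $\kappa=0$ gives $\overline{r}=r$ are in fact slightly more explicit than the paper's own proof, which defers the induction remark to \rem{rem7}.
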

\begin{proof}
After substitution from~\eqref{equ11} and the use of \ass{ass2}, $\widehat{D}_{i}\geq 0$, and \eqref{equ23}-\eqref{equ24}, \eqref{equ21} yields:
\begin{equation}\label{equ25}
\begin{aligned}
\dot{V}_{1}\leq &-\sum^{N}_{i=1}\frac{\widehat{K}_{i}(t)}{B+\sigma D}\tbm{s}_{i}\fbm{s}_{i}-\sum^{N}_{i=1}\frac{\widehat{D}_{i}}{B+\sigma D}\dtbm{x}_{i}\dbm{x}_{i}\\
&-\frac{4\sigma\lambda_{L}PB}{(r^{2}+Q)(B+\sigma D)} V_{p}+\frac{\|\fbm{f}\|^{2}}{B+\sigma D}\\
\leq &-\frac{\rho}{2(B+\sigma D)}\sum^{N}_{i=1}\lambda_{i2}\tbm{s}_{i}\fbm{s}_{i}-\rho V_{p}+\frac{\|\fbm{f}\|^{2}}{B+\sigma D}\\
\leq &-\rho V_{1}+\frac{\overline{f}^{2}}{B+\sigma D}\textrm{,}
\end{aligned}
\end{equation}
Time integration of $\dot{V}_{1}$ from $0$ to $t\geq 0$ then leads to:
\begin{equation}\label{equ26}
\begin{aligned}
V_{1}(t)\leq e^{-\rho t} V_{1}(0)+\frac{\overline{f}^{2}}{\rho\cdot(B+\sigma D)}\textrm{,}
\end{aligned}
\end{equation}
which, together with the definition of $\Delta$, yields:
\begin{align*}
V_{p}(t)\leq V_{1}(t)\leq e^{-\rho t} V_{p}(0)+\Delta\ \forall t\geq 0\textrm{.}
\end{align*}
If $Q$ and $P$ satisfy~\eqref{equ4}, \prop{prop2} implies that $\|\fbm{x}_{ij}(t)\|<r\ \forall (i,j)\in\mathcal{E}(0)$, i.e., the swarm connectivity is preserved.
\end{proof}

\begin{remark}\label{rem7}
\normalfont
The connectivity maintenance proof employs set invariance analysis~\cite{Blanchini1999Auto,Ames2017TAC}. Namely, based on the assumption that $\mathcal{E}(\tau)=\mathcal{E}(0)\ \forall \tau\in [0,t)$, the control $\fbm{u}_{i}$ of slave $i$ at time $t$ can employ the position of slave $j\in\mathcal{N}_{i}(0)$. If the controls $\fbm{u}_{i}\ i=1,\cdots,N$ guarantee that $V_{p}(t)<\psi_{\max}$ as in~\theo{theorem1}, then $\mathcal{E}(t)=\mathcal{E}(0)$. Thus, the controls $\fbm{u}_{i}\ i=1,\cdots,N$ render $\mathcal{E}(0)$ invariant by induction on time. Induction on time has also served to prove connectivity preservation for networks of second-order integrators~\cite{Su2010SCL} and to ensure the safety of locally Lipschitz nonlinear systems~\cite{Ames2017TAC}. 
\end{remark}

The dynamic strategy~\eqref{equ13} tackles connectivity-preserving swarm teleoperation by transforming the dynamics~\eqref{equ1} into the first-order passive form~\eqref{equ9}. This model order reduction is inspired by classical feedback passivation~\cite{Sepulchre2012Springer}. This paper, however, uses it to upper-bound, by~\eqref{equ19}, the impact of state-dependent mismatches $\fbm{\Delta}_{i}$ on swarm connectivity without compensation of system dynamics. By~\eqref{equ22} and~\eqref{equ24}, the dynamic updating of $K_{i}(t)$  in~\eqref{equ13} is sufficient to suppress the impact of $\fbm{\Delta}_{i}$ on the connectivity of the teleoperated swarm. 

\lem{lem1} reveals the structural controllability of a tree $\mathcal{G}(0)$ of an EL network. It shows that the structure of a connected EL network with $N$ agents can be explicitly determined by the $N-1$ edges of a tree spanning the network. All other edges are redundant for controlling the structure of the network. The argument in \lem{lem1} is energetic: \eqref{equ11}, which holds for the tree network $\mathcal{G}(0)$, serves to bound $\dot{V}_{1}$ in~\eqref{equ21} by~\eqref{equ25}; then, by~\prop{prop2}, the time integration~\eqref{equ26} implies that the distances between all pairs of slaves $(i,j)\in\mathcal{E}(0)$ can be kept strictly smaller than the communication radius $r$. Hence, the distributed control $\fbm{u}_{i}$ can regulate the energy $V_{p}$ stored in the network using the local $\bm{\theta}_{i}$. 

Compared to autonomous and leader-follower MRS-s, connectivity preservation for teleoperated swarms faces the novel challenge of the unpredictable user command $\fbm{f}$. Unlike external disturbances, say wind forces, the user input $\fbm{f}$ guides the motion of the slave swarm by commanding the informed slave. Hence, the user command cannot be rejected in swarm teleoperation. Nevertheless, $\fbm{f}$ may endanger the connectivity of the swarm, for example by moving the informed slave such that not all slaves can keep up with it. This section proves that the distributed control~\eqref{equ13} can be designed to eliminate the threat posed by the operator command to the connectivity of the swarm. The uniqueness of the proposed control is apparent in the condition~\eqref{equ24} on $K_{i}(t)$, especially in its P+d form in~\rem{rem6}. Namely, the control modulates dynamically and simultaneously the interconnections and local damping injection of the slaves according to their inter-distances. To the authors' best knowledge, the control~\eqref{equ13} is the first strategy to tackle connectivity-preserving swarm teleoperation with a state-dependent updating of the coupling and damping gains.

\subsection{Time-Delay Slave Swarm}\label{sec: time delay}

In a time-delay swarm, the slaves move during the time it takes their information to reach their neighbours. Thus, the distance dependency of inter-slave communications needs to include the mismatched inter-slave distances. By~\ass{ass3}, at time $t$, slave $i$ receives the information sent by slave $j$ at time $t-T_{ji}(t)$ and slave $j$ receives the information sent by slave $i$ at time $t-T_{ij}(t)$. Therefore, the bidirectional communication link $(i,j)$ between slaves $i$ and $j$ exists at time $t$, i.e., $(i,j)\in\mathcal{E}(t)$, iff $\|\fbm{x}_{i}(t)-\fbm{x}_{j}(t-T_{ji}(t))\|< r$ and $\|\fbm{x}_{j}(t)-\fbm{x}_{i}(t-T_{ij}(t))\|< r$. Letting:
\begin{equation}\label{equ27}
|\fbm{x}_{ij}(\tau)|_{T}=\sup\limits_{-\overline{T}_{ji}\leq\delta\leq 0}\|\fbm{x}_{i}(\tau)-\fbm{x}_{j}(\tau+\delta)\|\textrm{,}
\end{equation}
every interaction link $(i,j)\in\mathcal{E}(0)$ in the tree $\mathcal{G}(0)$ can be maintained by making $|\fbm{x}_{ij}(t)|_{T}<r$ and $|\fbm{x}_{ji}(t)|_{T}<r\ \forall t\geq 0$.  

As in~\sect{sec: delay free}, the control design and the proof of connectivity maintenance use set invariance analysis. Specifically: 
\begin{align*}
|\fbm{x}_{ij}(t)|_{T}=&\sup\limits_{-\overline{T}_{ji}\leq\delta\leq 0}\|\fbm{x}_{ij}(t)+\fbm{x}_{j}(t)-\fbm{x}_{j}(t+\delta)\|\\
\leq &\|\fbm{x}_{ij}(t)\|+\sup\limits_{-\overline{T}_{ji}\leq\delta\leq 0}\|\fbm{x}_{j}(t)-\fbm{x}_{j}(t+\delta)\|\textrm{,}
\end{align*}
by the triangle inequality with $\fbm{x}_{ij}(t)=\fbm{x}_{i}(t)-\fbm{x}_{j}(t)$, and thus $|\fbm{x}_{ij}(t)|_{T}<r$ is ensured by making:
\begin{align*}
\|\fbm{x}_{ij}(t)\|<\overline{r}\quad \text{and}\quad \sup\limits_{-\overline{T}_{ji}\leq\delta\leq 0}\|\fbm{x}_{j}(t)-\fbm{x}_{j}(t+\delta)\|\leq \kappa\epsilon\textrm{,}
\end{align*}
for $\overline{r}=r-\kappa\epsilon$ and $0<\kappa<1$. In other words, all interaction links in the time-delay tree network $\mathcal{G}(0)$ can be maintained by rendering invariant the following sets:
\begin{equation}\label{equ28}
\begin{aligned}
\mathcal{S}_{i}(t)=&\Bigg\{j\in\mathcal{N}_{i}(0)\ \Big|\ \|\fbm{x}_{ij}(t)\|<\overline{r}\quad \text{and}\\
&\quad \sup\limits_{-\overline{T}_{ji}\leq\delta\leq 0}\|\fbm{x}_{j}(t)-\fbm{x}_{j}(t+\delta)\|\leq \kappa\epsilon\Bigg\}
\end{aligned}
\end{equation} 
for all slaves $i=1,\cdots,N$ and for all time $t\geq 0$. Assumptions~\ref{ass1} and~\ref{ass4} imply that $\mathcal{S}_{i}(0)=\mathcal{N}_{i}(0)$. Letting:
\begin{equation}\label{equ29}
\mathcal{S}_{i}(\tau)=\mathcal{N}_{i}(0)\ \forall\tau\in[0,t)\ \textrm{and }\forall i=1,\cdots,N
\end{equation} 
 it suffices to prove that $\mathcal{S}_{i}(t)=\mathcal{N}_{i}(0)$ by induction on time. 

By the assumption~\eqref{equ29}, the control of slave $i$ at time $t$ can use the delayed information of all slaves $j\in\mathcal{N}_{i}(0)$ as follows:
\begin{equation}\label{equ30}
\fbm{u}_{i}=-\Big(\sigma K_{i}(t)+B\Big)\hfbm{\theta}_{i}-\Big(K_{i}(t)+D\Big)\dbm{x}_{i}\textrm{,}
\end{equation}
with $K_{i}(t)$, $B$, $D$, $\sigma$ positive scalars to be determined, and:
\begin{align*}
\hfbm{\theta}_{i}=\sum_{j\in\mathcal{N}_{i}(0)}\nabla_{i}\psi(\|\fbm{x}^{d}_{ij}\|)\textrm{,}
\end{align*}
where $\fbm{x}^{d}_{ij}=\fbm{x}_{i}-\fbm{x}_{jd}$, $\fbm{x}_{jd}=\fbm{x}_{j}(t-T_{ji}(t))$ and:
\begin{align*}
\nabla_{i}\psi(\|\fbm{x}^{d}_{ij}\|)=\frac{2P\cdot\left(r^{2}+Q\right)}{\left(r^{2}-\|\fbm{x}^{d}_{ij}\|^{2}+Q\right)^{2}}\left(\fbm{x}_{i}-\fbm{x}_{jd}\right)\textrm{.}
\end{align*}

The Lyapunov-Krasovskii functional candidate for the investigation of the connectivity of the time-delay swarm is:
\begin{equation}\label{equ31}
V_{2}=V_{1}+\sum^{N}_{i=1}\frac{\Omega V_{ci}+\Upsilon V_{si}}{B+\sigma D}\textrm{,}
\end{equation}
where $V_{1}$ is defined in~\eqref{equ14}, and $V_{ci}$ and $V_{si}$ measure the impact of the delays on swarm connectivity and stability by:
\begin{align*}
V_{ci}=&\sum_{j\in\mathcal{N}_{i}(0)}\overline{T}_{ji}\int^{t}_{t-\overline{T}_{ji}}e^{-\rho(t-\tau)}\|\dbm{x}_{j}(\tau)\|^{2}d\tau\textrm{,}\\
V_{si}=&\sum_{j\in\mathcal{N}_{i}(0)}\overline{T}_{ji}\int^{0}_{-\overline{T}_{ji}}\int^{t}_{t+\delta}e^{-\rho(t-\tau)}\|\dbm{x}_{j}(\tau)\|^{2}d\tau d\delta
\end{align*}
with $\rho$, $\Omega$ and $\Upsilon$ positive constants to be determined. In particular, $V_{ci}$ serves to upper-bound the mismatched distance $\|\fbm{x}_{j}(t)-\fbm{x}_{j}(t-T_{ji}(t))\|$ caused by the delay $T_{ji}(t)$, and $V_{si}$ helps determine the damping to inject to stabilize the swarm.

As in~\eqrefs{equ14}{equ22}, the derivative of $V_{1}$ along the trajectories of~\eqref{equ9} in closed-loop with~\eqref{equ30} can be upper-bounded by:
\begin{equation}\label{equ32}
\begin{aligned}
\dot{V}_{1}\leq &-\sum^{N}_{i=1}\frac{\widehat{K}_{i}(t)\tbm{s}_{i}\fbm{s}_{i}+\widehat{D}_{i}\dtbm{x}_{i}\dbm{x}_{i}+\sigma B\tfbm{\theta}_{i}\bm{\theta}_{i}}{B+\sigma D}\\
&+\sum^{N}_{i=1}\frac{\sigma K_{i}(t)+B}{B+\sigma D}\tbm{s}_{i}\tilfbm{\theta}_{i}+\frac{\|\fbm{f}\|^{2}}{B+\sigma D}\textrm{,}
\end{aligned}
\end{equation}
where $\tilfbm{\theta}_{i}=\bm{\theta}_{i}-\hfbm{\theta}_{i}$ with $\bm{\theta}_{i}$ given in~\eqref{equ7}, and $\widehat{K}_{i}(t)$, $\widehat{D}_{i}$ are defined in~\eqref{equ22}. As $\overline{T}_{ji}$ are constants, the derivative of $V_{ci}$ is:
\begin{equation}\label{equ33}
\begin{aligned}
\dot{V}_{ci}=&\sum_{j\in\mathcal{N}_{i}(0)}\overline{T}_{ji}\left(\|\dbm{x}_{j}(t)\|^{2}-e^{-\rho\overline{T}_{ji}}\left\|\dbm{x}_{j}\left(t-\overline{T}_{ji}\right)\right\|^{2}\right)\\
&-\rho V_{ci}\leq \sum_{j\in\mathcal{N}_{i}(0)}\overline{T}_{ji}\dtbm{x}_{j}\dbm{x}_{j}-\rho V_{ci}\textrm{,}
\end{aligned}
\end{equation}
and the derivative of $V_{si}$ can be bounded by:
\begin{equation}\label{equ34}
\begin{aligned}
\dot{V}_{si}=&\sum_{j\in\mathcal{N}_{i}(0)}\overline{T}_{ji} \int^{0}_{-\overline{T}_{ji}}\|\dbm{x}_{j}(t)\|^{2}-e^{\rho\delta}\|\dbm{x}_{j}(t+\delta)\|^{2} d\delta-\rho V_{si}\\
=&\sum_{j\in\mathcal{N}_{i}(0)}\overline{T}^{2}_{ji}\dtbm{x}_{j}\dbm{x}_{j}-\overline{T}_{ji}\int^{0}_{-\overline{T}_{ji}}e^{\rho\delta}\|\dbm{x}_{j}(t+\delta)\|^{2}d\delta-\rho V_{si}\\
\leq &\sum_{j\in\mathcal{N}_{i}(0)}\overline{T}^{2}_{ji}\dtbm{x}_{j}\dbm{x}_{j}-\frac{\overline{T}_{ji}}{e^{\rho\overline{T}_{ji}}}\int^{0}_{-\overline{T}_{ji}}\|\dbm{x}_{j}(t+\delta)\|^{2}d\delta-\rho V_{si}\\
\leq &\sum_{j\in\mathcal{N}_{i}(0)}\overline{T}^{2}_{ji}\dtbm{x}_{j}\dbm{x}_{j}-\frac{\overline{T}_{ji}}{e^{\rho\overline{T}_{ji}}}\int^{t}_{t-T_{ji}(t)}\|\dbm{x}_{j}(\tau)\|^{2}d\tau-\rho V_{si}\textrm{.}
\end{aligned}
\end{equation}

In~\eqref{equ32}, the term containing $\tbm{s}_{i}\tilfbm{\theta}_{i}$, due to the delay-induced distortions $\tilfbm{\theta}_{i}$, endangers the connectivity and stability of the swarm. The first step to overcome its threat is to upper-bound the impact of the delay-induced distortions $\tilfbm{\theta}_{i}$ by the local state variables:
\begin{align*}
&\tbm{s}_{i}\nabla_{i}\psi(\|\fbm{x}_{ij}\|)-\tbm{s}_{i}\nabla_{i}\psi(\|\fbm{x}^{d}_{ij}\|)\\
=&\frac{2P\cdot\left(r^{2}+Q\right)\tbm{s}_{i}\fbm{x}_{ij}}{\left(r^{2}-\|\fbm{x}_{ij}\|^{2}+Q\right)^{2}}-\frac{2P\cdot(r^{2}+Q)\tbm{s}_{i}\fbm{x}_{ij}}{\left(r^{2}-\|\fbm{x}^{d}_{ij}\|^{2}+Q\right)^{2}}\\
&-\frac{2P\cdot(r^{2}+Q)\tbm{s}_{i}(\fbm{x}_{j}-\fbm{x}_{jd})}{\left(r^{2}-\|\fbm{x}^{d}_{ij}\|^{2}+Q\right)^{2}}\\
=&\frac{2P\cdot(r^{2}+Q)\tbm{s}_{i}\fbm{x}_{ij}}{\left(r^{2}-\|\fbm{x}_{ij}\|^{2}+Q\right)^{2}\left(r^{2}-\|\fbm{x}^{d}_{ij}\|^{2}+Q\right)^{2}}\\
&\cdot\left[\left(r^{2}-\|\fbm{x}^{d}_{ij}\|^{2}+Q\right)^{2}-\left(r^{2}-\|\fbm{x}_{ij}\|^{2}+Q\right)^{2}\right]\\
&-\frac{2P\cdot(r^{2}+Q)\tbm{s}_{i}(\fbm{x}_{j}-\fbm{x}_{jd})}{\left(r^{2}-\|\fbm{x}^{d}_{ij}\|^{2}+Q\right)^{2}}\\
\leq &\frac{2P\cdot(r^{2}+Q)\left|\tbm{s}_{i}\fbm{x}_{ij}\right|\left|\|\fbm{x}_{ij}\|^{2}-\|\fbm{x}^{d}_{ij}\|^{2}\right|}{\left(r^{2}-\|\fbm{x}_{ij}\|^{2}+Q\right)\left(r^{2}-\|\fbm{x}^{d}_{ij}\|^{2}+Q\right)}\\
&\cdot\left(\frac{1}{r^{2}-\|\fbm{x}_{ij}\|^{2}+Q}+\frac{1}{r^{2}-\|\fbm{x}^{d}_{ij}\|^{2}+Q}\right)\\
&+\frac{2P\cdot(r^{2}+Q)\left|\tbm{s}_{i}(\fbm{x}_{j}-\fbm{x}_{jd})\right|}{\left(r^{2}-\|\fbm{x}^{d}_{ij}\|^{2}+Q\right)^{2}}
\end{align*}
\begin{equation}\label{equ35}
\begin{aligned}
\leq&\frac{2P\cdot(r^{2}+Q)\left|\tbm{s}_{i}\fbm{x}_{ij}\right|\left(\|\fbm{x}_{ij}\|+\|\fbm{x}^{d}_{ij}\|\right)\|\fbm{x}_{j}-\fbm{x}_{jd}\|}{\left(r^{2}-\|\fbm{x}_{ij}\|^{2}+Q\right)\left(r^{2}-\|\fbm{x}^{d}_{ij}\|^{2}+Q\right)}\\
&\cdot\left(\frac{1}{r^{2}-\|\fbm{x}_{ij}\|^{2}+Q}+\frac{1}{r^{2}-\|\fbm{x}^{d}_{ij}\|^{2}+Q}\right)\\
&+\frac{2P\cdot(r^{2}+Q)\|\fbm{s}_{i}\|\|\fbm{x}_{j}-\fbm{x}_{jd}\|}{\left(r^{2}-\|\fbm{x}^{d}_{ij}\|^{2}+Q\right)^{2}}\\
=&\Lambda_{ij2}(t)\|\fbm{s}_{i}\|\|\fbm{x}_{j}-\fbm{x}_{jd}\|\textrm{,}
\end{aligned}
\end{equation}
where the assumption~\eqref{equ29} has permitted to use $\|\fbm{x}_{ij}(t)\|<r$ and $\|\fbm{x}^{d}_{ij}\|<r$, and:
\begin{align*}
\Lambda_{ij2}(t)=&\frac{2P\cdot(r^{2}+Q)\left(\|\fbm{x}_{ij}\|+\|\fbm{x}^{d}_{ij}\|\right)\|\fbm{x}_{ij}\|}{\left(r^{2}-\|\fbm{x}_{ij}\|^{2}+Q\right)\left(r^{2}-\|\fbm{x}^{d}_{ij}\|^{2}+Q\right)}\\
&\cdot\left(\frac{1}{r^{2}-\|\fbm{x}_{ij}\|^{2}+Q}+\frac{1}{r^{2}-\|\fbm{x}^{d}_{ij}\|^{2}+Q}\right)\\
&+\frac{2P\cdot(r^{2}+Q)}{\left(r^{2}-\|\fbm{x}^{d}_{ij}\|^{2}+Q\right)^{2}}\textrm{.}
\end{align*}

Employing~\eqref{equ35} and the Cauchy-Schwarz inequality, the stability threat due to the time delays can be limited by:
\begin{equation}\label{equ36}
\begin{aligned}
&\Big(\sigma K_{i}(t)+B\Big)\tbm{s}_{i}\tilfbm{\theta}_{i}-\sum_{j\in\mathcal{N}_{i}(0)}\frac{\Upsilon\overline{T}_{ji}}{e^{\rho\overline{T}_{ji}}}\int^{t}_{t-T_{ji}(t)}\|\dbm{x}_{j}(\tau)\|^{2}d\tau\\
\leq &\Big(\sigma K_{i}(t)+B\Big)\sum_{j\in\mathcal{N}_{i}(0)}\Lambda_{ij2}(t)\|\fbm{s}_{i}\|\|\fbm{x}_{j}-\fbm{x}_{jd}\|\\
&-\Upsilon\sum_{j\in\mathcal{N}_{i}(0)}e^{-\rho\overline{T}_{ji}}\overline{T}_{ji}\sum^{n}_{k=1}\int^{t}_{t-T_{ji}(t)}|\dot{x}^{k}_{j}(\tau)|^{2}d\tau\\
\leq &\Big(\sigma K_{i}(t)+B\Big)\sum_{j\in\mathcal{N}_{i}(0)}\Lambda_{ij2}(t)\|\fbm{s}_{i}\|\|\fbm{x}_{j}-\fbm{x}_{jd}\|\\
&-\Upsilon\sum_{j\in\mathcal{N}_{i}(0)}e^{-\rho\overline{T}_{ji}}\sum^{n}_{k=1}\left|\int^{t}_{t-T_{ji}(t)}|\dot{x}^{k}_{j}(\tau)|d\tau\right|^{2}\\
\leq &\Big(\sigma K_{i}(t)+B\Big)\sum_{j\in\mathcal{N}_{i}(0)}\Lambda_{ij2}(t)\|\fbm{s}_{i}\|\|\fbm{x}_{j}-\fbm{x}_{jd}\|\\
&-\Upsilon\sum_{j\in\mathcal{N}_{i}(0)}e^{-\rho\overline{T}_{ji}}\sum^{n}_{k=1}\left|\int^{t}_{t-T_{ji}(t)}\dot{x}^{k}_{j}(\tau)d\tau\right|^{2}\\
=&\Big(\sigma K_{i}(t)+B\Big)\sum_{j\in\mathcal{N}_{i}(0)}\Lambda_{ij2}(t)\|\fbm{s}_{i}\|\|\fbm{x}_{j}-\fbm{x}_{jd}\|\\
&-\Upsilon\sum_{j\in\mathcal{N}_{i}(0)}e^{-\rho\overline{T}_{ji}}\|\fbm{x}_{j}-\fbm{x}_{jd}\|^{2}\\
\leq &\frac{1}{4\Upsilon}\left[\sigma K_{i}(t)+B\right]^{2}\sum_{j\in\mathcal{N}_{i}(0)}\Lambda^{2}_{ij2}(t) e^{\rho\overline{T}_{ji}}\|\fbm{s}_{i}\|^{2}\textrm{,}
\end{aligned}
\end{equation}
where $\dot{x}^{k}_{j}(\tau)$ is the $k$-th element of $\dbm{x}_{j}(\tau)$, $k=1,\cdots,n$. Substitution from~\eqref{equ36} in the combined~\eqrefs{equ32}{equ34} bounds the derivative of $V_{2}$ by:
\begin{equation}\label{equ37}
\begin{aligned}
\dot{V}_{2}\leq &-\sum^{N}_{i=1}\frac{\widetilde{K}_{i}(t)\tbm{s}_{i}\fbm{s}_{i}+\widetilde{D}_{i}\dtbm{x}_{i}\dbm{x}_{i}+\rho\cdot(\Omega V_{ci}+\Upsilon V_{si})}{B+\sigma D}\\
&-\frac{\sigma B}{B+\sigma D} \sum^{N}_{i=1}\tfbm{\theta}_{i}\bm{\theta}_{i}+\frac{\|\fbm{f}\|^{2}}{B+\sigma D}\textrm{,}
\end{aligned}
\end{equation}
where:
\begin{align*}
\widetilde{K}_{i}(t)=&\widehat{K}_{i}(t)-\frac{1}{4\Upsilon}\left[\sigma K_{i}(t)+B\right]^{2}\sum_{j\in\mathcal{N}_{i}(0)}\Lambda^{2}_{ij2}(t) e^{\rho\overline{T}_{ji}}\textrm{,}\\
\widetilde{D}_{i}=&\widehat{D}_{i}-\sum_{j\in\mathcal{N}_{i}(0)}\left(\Upsilon\overline{T}^{2}_{ij}+\Omega\overline{T}_{ij}\right)\textrm{.}
\end{align*}

\begin{theorem}\label{theorem2}
The control~\eqref{equ30} renders the sets $\mathcal{S}_{i}(t),\ i=1,\cdots,N$, invariant and thus maintains the connectivity of the teleoperated time-delay swarm~\eqref{equ1} with \asses{ass1}{ass4} if it satisfies~\eqref{equ4} and:
\begin{align}\label{equ38}
\widetilde{K}_{i}(t)\geq \frac{1}{2}\rho\lambda_{i2}\quad \text{and}\quad \widetilde{D}_{i}\geq 0\textrm{,}
\end{align}
for all $i=1,\cdots,N$, where $\Delta$ is given in~\theo{theorem1} and:
\begin{equation}\label{equ39}
\begin{aligned}
\rho=\frac{4\sigma\lambda_{L}PB}{(r^{2}+Q)(B+\sigma D)}\text{ and }\Omega=\frac{P\overline{r}^{2} e^{\rho\overline{T}}(B+\sigma D)}{\kappa^{2}\epsilon^{2}\cdot\left(r^{2}-\overline{r}^{2}+Q\right)}
\end{aligned}
\end{equation}
with $\overline{T}=\max\limits_{(i,j)}\left(\overline{T}_{ij}\right)$ the maximum delay in the network.
\end{theorem}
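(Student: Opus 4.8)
The plan is to mirror the set-invariance argument of \theo{theorem1}, now driven by the Lyapunov--Krasovskii functional $V_{2}$ in~\eqref{equ31}, and to close an induction on time on the hypothesis~\eqref{equ29}. Assuming $\mathcal{S}_{i}(\tau)=\mathcal{N}_{i}(0)$ for all $\tau\in[0,t)$, so that the delayed signals used in~\eqref{equ30} are legitimate, I would start from the derivative bound~\eqref{equ37} and collapse its right-hand side into a single dissipation inequality. Imposing the gain conditions~\eqref{equ38}, the $\tbm{s}_{i}\fbm{s}_{i}$ terms are dominated using $\widetilde{K}_{i}(t)\geq\frac{1}{2}\rho\lambda_{i2}$ together with $\fbm{M}_{i}(\fbm{x}_{i})\preceq\lambda_{i2}\fbm{I}$ from~\ref{P1}, the $\dtbm{x}_{i}\dbm{x}_{i}$ terms drop because $\widetilde{D}_{i}\geq 0$, and the coupling term is bounded below through~\lem{lem1}, i.e. $\sum_{i}\tfbm{\theta}_{i}\bm{\theta}_{i}\geq\frac{4\lambda_{L}P}{r^{2}+Q}V_{p}$. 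The choice $\rho=\frac{4\sigma\lambda_{L}PB}{(r^{2}+Q)(B+\sigma D)}$ in~\eqref{equ39} is exactly what makes $\frac{\sigma B}{B+\sigma D}\cdot\frac{4\lambda_{L}P}{r^{2}+Q}=\rho$, so the coupling term contributes $-\rho V_{p}$. Recognizing the remaining pieces as the kinetic part $V_{1}-V_{p}$ and the delay-memory terms $V_{2}-V_{1}$, these combine into $\dot{V}_{2}\leq-\rho V_{2}+\frac{\|\fbm{f}\|^{2}}{B+\sigma D}$, and \ass{ass2} yields $\dot{V}_{2}\leq-\rho V_{2}+\frac{\overline{f}^{2}}{B+\sigma D}$.

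Next I would integrate this comparison inequality to obtain $V_{2}(t)\leq e^{-\rho t}V_{2}(0)+\frac{\overline{f}^{2}}{\rho(B+\sigma D)}$. Here \ass{ass4} is essential: since each slave is at rest on $[-\overline{T}_{i},0]$, both $V_{ci}(0)$ and $V_{si}(0)$ vanish, so $V_{2}(0)=V_{1}(0)\leq V_{p}(0)+\frac{1}{2(B+\sigma D)}\sum_{i}\lambda_{i2}\|\fbm{s}_{i}(0)\|^{2}$. With the $\Delta$ of~\theo{theorem1} this gives $V_{p}(t)\leq V_{2}(t)\leq V_{p}(0)+\Delta$, and \prop{prop1} (under~\eqref{equ4}) upgrades it to $V_{2}(t)<\psi_{\max}$ for all $t$. \prop{prop2} then delivers the first defining inequality of $\mathcal{S}_{i}(t)$, namely $\|\fbm{x}_{ij}(t)\|<\overline{r}$ for every $(i,j)\in\mathcal{E}(0)$.

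The second, genuinely new, half is to certify the delay-displacement bound $\sup_{-\overline{T}_{ji}\leq\delta\leq 0}\|\fbm{x}_{j}(t)-\fbm{x}_{j}(t+\delta)\|\leq\kappa\epsilon$ in the definition~\eqref{equ28} of $\mathcal{S}_{i}(t)$. Since every summand of $V_{2}$ is nonnegative, the bound $V_{2}(t)<\psi_{\max}$ isolates each term of $\frac{\Omega}{B+\sigma D}V_{ci}$; lower-bounding $e^{-\rho(t-\tau)}\geq e^{-\rho\overline{T}}$ inside $V_{ci}$ and inserting the value of $\Omega$ from~\eqref{equ39} reduces this to $\overline{T}_{ji}\int_{t-\overline{T}_{ji}}^{t}\|\dbm{x}_{j}(\tau)\|^{2}d\tau<\kappa^{2}\epsilon^{2}$. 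A Cauchy--Schwarz estimate $\|\fbm{x}_{j}(t)-\fbm{x}_{j}(t+\delta)\|^{2}=\|\int_{t+\delta}^{t}\dbm{x}_{j}(\tau)d\tau\|^{2}\leq\overline{T}_{ji}\int_{t-\overline{T}_{ji}}^{t}\|\dbm{x}_{j}(\tau)\|^{2}d\tau$ then yields the required displacement bound.

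Together the two inequalities give $\mathcal{S}_{i}(t)=\mathcal{N}_{i}(0)$, closing the induction and proving invariance of the sets $\mathcal{S}_{i}(t)$ and hence connectivity of the time-delay swarm. I expect the main obstacle to be the bookkeeping that verifies the exact cancellation collapsing the bound to $-\rho V_{2}$ --- in particular, confirming that the specific $\rho$ and $\Omega$ in~\eqref{equ39} are simultaneously the right constants for the $V_{p}$ dissipation (via~\lem{lem1}) and for the $\kappa^{2}\epsilon^{2}$ displacement normalization --- together with a careful statement of the induction step so that the hypothesis~\eqref{equ29} legitimizing the delayed feedback is actually re-established at time $t$.
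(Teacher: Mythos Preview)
Your proposal is correct and follows the same route as the paper: combine~\eqref{equ37} with~\eqref{equ38}, \lem{lem1} and the $\rho$ in~\eqref{equ39} to obtain $\dot{V}_{2}\leq-\rho V_{2}+\overline{f}^{2}/(B+\sigma D)$, integrate, use \ass{ass4} to make $V_{ci}(0)=V_{si}(0)=0$, invoke \prop{prop2} for $\|\fbm{x}_{ij}(t)\|<\overline{r}$, and then extract the $\kappa\epsilon$ displacement bound from the $\Omega V_{ci}$ piece via Cauchy--Schwarz together with the value of $\Omega$ in~\eqref{equ39}. The only cosmetic difference is that the paper carries out the Cauchy--Schwarz step componentwise (obtaining the slightly stronger two-parameter supremum $\sup_{\delta_{1},\delta_{2}}\|\fbm{x}_{j}(t+\delta_{1})-\fbm{x}_{j}(t+\delta_{2})\|$), whereas your direct vector-valued estimate already suffices for the set $\mathcal{S}_{i}(t)$ as defined in~\eqref{equ28}.
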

\begin{proof}
As in the proof of~\theo{theorem1}, \eq{equ11} and condition~\eqref{equ38} yield:
\begin{align*}
V_{2}(t)\leq e^{-\rho t} V_{2}(0)+\frac{\overline{f}^{2}}{\rho\cdot\left(B+\sigma D\right)}\textrm{.}
\end{align*}
Assumptions~\ref{ass3} and~\ref{ass4} indicate that $V_{ci}(0)=V_{si}(0)=0$. Because $V_{ci}(t)$ and $V_{si}(t)$ are nonnegative, it follows that:
\begin{align*}
V_{p}(t)\leq V_{2}(t)\leq V_{2}(0)+\frac{\overline{f}^{2}}{\rho\cdot\left(B+\sigma D\right)}=V_{p}(0)+\Delta\textrm{,}
\end{align*}
which, together with~\prop{prop2}, implies that $\|\fbm{x}_{ij}(t)\|<\overline{r}$ for all $(i,j)\in\mathcal{E}(0)$.

The Cauchy-Schwarz inequality leads to:
\begin{align*}
&\overline{T}_{ji}\int^{t}_{t-\overline{T}_{ji}}\|\dbm{x}_{j}(\tau)\|^{2}d\tau\\ 
=&\overline{T}_{ji}\sum^{n}_{k=1}\int^{t}_{t-\overline{T}_{ji}}\left|\dot{x}^{k}_{j}(\tau)\right|^{2}d\tau \geq \sum^{n}_{k=1}\left|\int^{t}_{t-\overline{T}_{ji}}\left|\dot{x}^{k}_{j}(\tau)\right|d\tau\right|^{2}\\
\geq &\sum^{n}_{k=1}\left|\sup\limits_{-\overline{T}_{ji}\leq\delta\leq 0}x^{k}_{j}(t+\delta)-\inf\limits_{-\overline{T}_{ji}\leq\delta\leq 0}x^{k}_{j}(t+\delta)\right|^{2}\\
\geq &\sup\limits_{-\overline{T}_{ji}\leq\delta_{1},\delta_{2}\leq 0}\|\fbm{x}_{j}(t+\delta_{1})-\fbm{x}_{j}(t+\delta_{2})\|^{2}\\
\geq &\sup\limits_{-\overline{T}_{ji}\leq\delta\leq 0}\|\fbm{x}_{j}(t)-\fbm{x}_{j}(t+\delta)\|^{2}\textrm{,}
\end{align*}
and to the lower-bound on $V_{2}(t)$:
\begin{align*}
V_{2}(t)\geq &\frac{\Omega V_{ci}}{B+\sigma D}=\frac{\Omega\overline{T}_{ji}}{B+\sigma D}\int^{t}_{t-\overline{T}_{ji}}e^{-\rho(t-\tau)}\|\dbm{x}_{j}(\tau)\|^{2}d\tau\\
\geq &\frac{\Omega\overline{T}_{ji}e^{-\rho\overline{T}_{ji}}}{B+\sigma D}\int^{t}_{t-\overline{T}_{ji}}\|\dbm{x}_{j}(\tau)\|^{2}d\tau\\ 
\geq &\frac{\Omega e^{-\rho\overline{T}_{ji}}}{B+\sigma D}\sup\limits_{-\overline{T}_{ji}\leq\delta\leq 0}\|\fbm{x}_{j}(t)-\fbm{x}_{j}(t+\delta)\|^{2}\textrm{.}
\end{align*}
Since $V_{2}(t)$ satisfies~\eqref{equ4}, $V_{2}(t)\leq V_{p}(0)+\Delta<\psi_{\max}$, condition~\eqref{equ39} further leads to:
\begin{align*}
&\sup\limits_{-\overline{T}_{ji}\leq\delta\leq 0}\|\fbm{x}_{j}(t)-\fbm{x}_{j}(t+\delta)\|\\
\leq &\sqrt{\frac{e^{\rho\overline{T}_{ji}}}{\Omega}\left(B+\sigma D\right)V_{2}(t)}
\leq \sqrt{\frac{P\overline{r}^{2}e^{\rho\overline{T}_{ji}}(B+\sigma D)}{\Omega\left(r^{2}-\overline{r}^{2}+Q\right)}}\leq \kappa\epsilon\textrm{.}
\end{align*}
Thus, $\mathcal{S}_{i}(t)=\mathcal{N}_{i}(0)$ are positively invariant for all $i=1,\cdots,N$, i.e., all interaction links and, with them, the connectivity of the time-delay EL tree network are preserved.
\end{proof}

The following parameter selection procedure proves the feasibility of \theo{theorem2}. First, set an upper bound $\overline{\sigma}>0$ to select $\sigma$. Then, conditions~\eqref{equ22} and~\eqref{equ39} lead to:
\begin{equation}\label{equ40}
\begin{aligned}
\widetilde{D}_{i}=&\left(1-\frac{P\overline{r}^{2}e^{\rho\overline{T}}\sigma}{\kappa^{2}\epsilon^{2}\cdot(r^{2}-\overline{r}^{2}+Q)}\sum_{j\in\mathcal{N}_{i}(0)}\overline{T}_{ij}\right)D\\
&-\sum_{j\in\mathcal{N}_{i}(0)}\left(2\sigma\cdot(\eta_{i}+\gamma_{i}+\zeta_{i}+\eta_{j}+\gamma_{j})+\Upsilon\overline{T}^{2}_{ij}\right)\\
&-\frac{PB\overline{r}^{2}e^{\rho\overline{T}}}{\kappa^{2}\epsilon^{2}\cdot(r^{2}-\overline{r}^{2}+Q)}\sum_{j\in\mathcal{N}_{i}(0)}\overline{T}_{ij}\textrm{,}
\end{aligned}
\end{equation}
and condition $\widetilde{D}_{i}\geq 0$ in~\eqref{equ38} requires that:
\begin{align*}
D>\frac{PB\overline{r}^{2}}{\kappa^{2}\epsilon^{2}\cdot(r^{2}-\overline{r}^{2}+Q)}\sum_{j\in\mathcal{N}_{i}(0)}\overline{T}_{ij}\textrm{.}
\end{align*}
Now, $\rho$ in~\eqref{equ39} can be upper-bounded by:
\begin{align*}
\rho\leq\frac{4\lambda_{L}PB}{(r^{2}+Q)D}\leq\frac{4\lambda_{L}\kappa^{2}\epsilon^{2}\cdot(r^{2}-\overline{r}^{2}+Q)}{\overline{r}^{2}(r^{2}+Q)\sum_{j\in\mathcal{N}_{i}(0)}\overline{T}_{ij}}=\overline{\rho}_{i}\textrm{.}
\end{align*}
After choosing $B>0$ heuristically and $0<\kappa<1$ and $Q>0$ small enough to guarantee $\omega_{1}>0$ and $\omega_{2}>0$ in~\eqref{equ4}, condition~\eqref{equ39} upper-bounds $\Delta$ by:
\begin{align*}
\Delta\leq\frac{\sigma^{2}}{B}\sum^{N}_{i=1}\lambda_{i2}\|\bm{\theta}_{i}(0)\|^{2}+\frac{(r^{2}+Q)\overline{f}^{2}}{4\sigma\lambda_{L}PB}=\overline{\Delta}
\textrm{.}
\end{align*}
Selecting a sufficiently large $P>0$ and small $0<\sigma<\overline{\sigma}$ by:
\begin{equation}\label{equ41}
\begin{aligned}
\sigma P\overline{r}^{2}e^{\overline{\rho}_{i}\overline{T}}\sum_{j\in\mathcal{N}_{i}(0)}\overline{T}_{ij}<\kappa^{2}\epsilon^{2}\cdot(r^{2}-\overline{r}^{2}+Q)\textrm{,}\\
P\omega_{2}>(r^{2}-\overline{r}^{2}+Q)(r^{2}-\hat{r}^{2}+Q)\overline{\Delta}
\end{aligned}
\end{equation}
then permits to select $D$ and makes $\omega_{3}>0$ in~\eqref{equ4}.

Because $\Lambda_{ij1}(t)$ and $\Lambda_{ij2}(t)$ depend on $\|\fbm{x}_{ij}(t)\|$, the updating of $K_{i}(t)$ by~\eqref{equ38} requires the neighbours' non-delayed information. Nevertheless, the triangle inequality:
\begin{align*}
\|\fbm{x}_{ij}(t)\|\leq \|\fbm{x}^{d}_{ij}(t)\|+\|\fbm{x}_{j}(t)-\fbm{x}_{j}(t-T_{ji}(t))\|
\end{align*}
permits to replace $\|\fbm{x}_{ij}\|$ with:
\begin{align}\label{equ42}
\chi_{ij}(t)=\min\left(\overline{r},\|\fbm{x}^{d}_{ij}\|+\kappa\epsilon\right)
\end{align}
when updating $K_{i}(t)$. More specifically, the invariance of $\mathcal{S}_{i}(\tau)$ on $\tau\in[0,t)$, presumed in~\eqref{equ29}, indicates that $\|\fbm{x}_{ij}(t)\|\leq \chi_{ij}(t)$. After choosing $\eta_{i}$, $\gamma_{i}$ and $\zeta_{i}$ heuristically, an upper-bound on $\Lambda_{ij1}(t)$ in~\eqref{equ19} is given by:
\begin{equation}\label{equ43}
\begin{aligned}
\overline{\Lambda}_{ij1}(t)=&\frac{P^{2}(r^{2}+Q)}{(r^{2}-\chi^{2}_{ij}(t)+Q)^{4}}\left(\frac{\lambda^{2}_{i2}}{\gamma_{i}}+\frac{c^{2}_{i}\chi^{2}_{ij}(t)}{2\zeta_{i}}\right)\\
&+\frac{16\lambda^{2}_{i2}P^{2}\chi^{2}_{ij}(t)(r^{2}+Q)^{2}}{\eta_{i}\cdot(r^{2}-\chi^{2}_{ij}(t)+Q)^{6}}\textrm{.}
\end{aligned}
\end{equation}
Because $\widetilde{K}_{i}(t)$ is quadratic in $K_{i}(t)$, the key to making $\widetilde{K}_{i}(t)\geq\rho\lambda_{i2}/2$ is to reduce the order of $\widetilde{K}_{i}(t)$ in $K_{i}(t)$. To this end, first note that the assumption~\eqref{equ29} leads to $\overline{\Lambda}_{ij1}(t)\leq\overline{\Lambda}_{ij1}$ and to $\Lambda_{ij2}(t)\leq\overline{\Lambda}_{ij2}$ with:
\begin{equation}\label{equ44}
\begin{aligned}
\overline{\Lambda}_{ij1}=&\frac{\lambda^{2}_{i2}P^{2}\cdot(r^{2}+Q)^{2}}{\gamma_{i}\cdot(r^{2}-\overline{r}^{2}+Q)^{4}}+\frac{c^{2}_{i}P^{2}\overline{r}^{2}(r^{2}+Q)}{2\zeta_{i}\cdot(r^{2}-\overline{r}^{2}+Q)^{4}}\\
&+\frac{16\lambda^{2}_{i2}P^{2}\overline{r}^{2}(r^{2}+Q)^{2}}{\eta_{i}\cdot(r^{2}-\overline{r}^{2}+Q)^{6}}\textrm{,}\\
\overline{\Lambda}_{ij2}=&\frac{2P\overline{r}(r^{2}+Q)(\overline{r}+r)}{(r^{2}-\overline{r}^{2}+Q)Q}\left(\frac{1}{r^{2}-\overline{r}^{2}+Q}+\frac{1}{Q}\right)\\
&+\frac{2P\cdot(r^{2}+Q)}{Q^{2}}\textrm{.}
\end{aligned}
\end{equation}
With the $0<\sigma<\overline{\sigma}$, $P$, $Q$ and $B$ chosen above, set the following upper bound on $K_{i}(t)$ for every slave $i$:
\begin{equation}\label{equ45}
\begin{aligned}
\overline{K}_{i}=\frac{\overline{\sigma}}{\overline{\sigma}-\sigma}\left(\sum_{j\in\mathcal{N}_{i}(0)}\sigma\overline{\Lambda}_{ij1}+\frac{\varrho_{i}}{4}+\frac{\overline{\rho}_{i}\lambda_{i2}}{2}+\frac{B}{\overline{\sigma}}\right)\textrm{,}
\end{aligned}
\end{equation}
and select $\Upsilon$ by:
\begin{equation}\label{equ46}
\Upsilon=\frac{\overline{\sigma}}{4}\max\limits_{i}\left(\sum_{j\in\mathcal{N}_{i}(0)}\overline{\Lambda}^{2}_{ij2}e^{\overline{\rho}_{i}\overline{T}_{ji}}\left(\sigma\overline{K}_{i}+B\right)\right)\textrm{.}
\end{equation}
Then, the dynamic updating of $K_{i}(t)$ by:
\begin{align}\label{equ47}
K_{i}(t)=\frac{\overline{\sigma}}{\overline{\sigma}-\sigma}\left(\sum_{j\in\mathcal{N}_{i}(0)}\sigma\overline{\Lambda}_{ij1}(t)+\frac{\varrho_{i}}{4}+\frac{\overline{\rho}_{i}\lambda_{i2}}{2}+\frac{B}{\overline{\sigma}}\right)
\end{align}
guarantees that $K_{i}(t)\leq\overline{K}_{i}$ and further that:
\begin{align*}
\frac{1}{4\Upsilon}\Big(\sigma K_{i}(t)+B\Big)^{2}\sum_{j\in\mathcal{N}_{i}(0)}\Lambda^{2}_{ij2}(t)e^{\rho\overline{T}_{ji}}\leq \frac{1}{\overline{\sigma}}\Big(\sigma K_{i}(t)+B\Big)\textrm{.}
\end{align*}
Therefore, $\widetilde{K}_{i}(t)$ can be lower-bounded by:
\begin{align*}
\widetilde{K}_{i}(t)\geq &\widehat{K}_{i}(t)-\frac{1}{\overline{\sigma}}\Big(\sigma K_{i}(t)+B\Big)\\
=&\frac{\overline{\sigma}-\sigma}{\overline{\sigma}}K_{i}(t)-\sum_{j\in\mathcal{N}_{i}(0)}\sigma\Lambda_{ij1}(t)-\frac{\varrho_{i}}{4}-\frac{B}{\overline{\sigma}}\geq\frac{\rho\lambda_{i2}}{2}\textrm{.}
\end{align*}

Finally, the $\Upsilon$ in~\eqref{equ46} and a sufficiently large damping gain $D$ guarantee that $\widetilde{D}_{i}\geq 0$ in~\eqref{equ40}. Algorithm~\ref{alg1} summarizes the gain selection procedure for the control~\eqref{equ30}, which maintains the tree network of the time-delay EL swarm.
\begin{algorithm}
  \caption{Gain selection for the dynamic control~\eqref{equ30}}
  \Input \\
  \Output 
  \begin{algorithmic}[1]\label{algorithm1}
	\STATE{Choose $\overline{\sigma}>0$ and $B>0$ heuristically;}
	\STATE{Select $0<\kappa<1$ and $Q>0$ such that $\omega_{1}>0$ and $\omega_{2}>0$ in condition~\eqref{equ4};}
	\STATE{Compute $\overline{\rho}_{i}$, choose $P>0$ and $0<\sigma<\overline{\sigma}$ by~\eqref{equ41} such that $\omega_{3}>0$ in condition~\eqref{equ4};}
	\STATE{Choose positive $\eta_{i}$, $\gamma_{i}$ and $\zeta_{i}$ heuristically;}
	\STATE{Set $\overline{\Lambda}_{ij1}(t)$ by~\eqref{equ43}, and update $K_{i}(t)$ dynamically by~\eqref{equ47};}
    \STATE{Compute $\overline{\Lambda}_{ij1}$, $\overline{\Lambda}_{ij2}$, $\overline{K}_{i}$ and $\Upsilon$ by~\eqref{equ44}-\eqref{equ46}, respectively;}
    \STATE{Select $D$ satisfying $\widetilde{D}_{i}\geq 0$ in~\eqref{equ40};}
    \RETURN $K_{i}(t)$, $B$ and $D$.
  \end{algorithmic}
  \label{alg1}
\end{algorithm}

\subsection{ISS Swarm Teleoperation}\label{sec: iss}

Connectivity-preserving swarm teleoperation control permits the human operator to command the motion of the slave swarm subject to maintaining its network connectivity~\cite{Lee2013TMECH}. Similarly to~\cite{Lee2010TRO}, this paper defines the state of the slave swarm by $\bm{\phi}=(\dtbm{x}, \ \ttilbm{x})^\mathsf{T}$, where: 
\begin{align*}
\dbm{x}=(\dtbm{x}_{1},\cdots,\dtbm{x}_{N})^\mathsf{T}
\text{ and } \tilbm{x}=\left(\tbm{D}\otimes\fbm{I}_{n}\right)\fbm{x}
\end{align*}
stack the slave velocities $\dbm{x}_{i}$ and the position errors $\fbm{x}_{ij}=\fbm{x}_{i}-\fbm{x}_{j}$ between all pairs of adjacent slaves $(i,j)\in\mathcal{E}(0)$. Because \theo{theorem2} has proven connectivity preservation under the control~\eqref{equ30}, this section guarantees stable synchronization of the time-delay slave swarm by rendering the set $\mathcal{I}=\left\{\bm{\phi}\ \big|\ \bm{\phi}\in\mathcal{L}_{\infty}\right\}$ invariant and by making the set:
\begin{align*}
\mathcal{A}=\left\{\bm{\phi}\ \Big|\ \|\bm{\phi}\|\leq\alpha\left(\sup\limits_{0\leq\tau\leq t}\|\fbm{f}(\tau)\|\right)\right\}\textrm{,}
\end{align*} 
globally attractive, where $\alpha(\cdot)$ is of class $\mathcal{K}_{\infty}$.  

By~\defi{iss}, making the teleoperated time-delay swarm ISS ensures the two sets $\mathcal{I}$ and $\mathcal{A}$ by Corollary~\ref{cor1} below.
\begin{corollary}\label{cor1}
The time-delay slave swarm~\eqref{equ1}, with~\asses{ass1}{ass4} and in closed-loop with the control~\eqref{equ30} with parameters $\overline{\sigma}$, $\sigma$, $\eta_{i}$, $\gamma_{i}$, $\zeta_{i}$, $P$ and $Q$, and gains $K_{i}(t)$, $B$ and $D$ selected as in~\theo{theorem2}, is ISS with input $\fbm{f}$ and state $\bm{\phi}$.
\end{corollary}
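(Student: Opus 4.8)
The plan is to verify the Lyapunov--Krasovskii sufficient conditions of Lemma~\ref{L4} for the functional $V_{2}$ defined in~\eqref{equ31}, and then to invoke~\defi{iss} to conclude that the sets $\mathcal{I}$ and $\mathcal{A}$ are, respectively, invariant and globally attractive. Throughout, I would work on the forward-invariant, connectivity-preserving region supplied by~\theo{theorem2}, on which $\|\fbm{x}_{ij}(t)\|<\overline{r}<r$ for every $(i,j)\in\mathcal{E}(0)$; all admissible trajectories (those starting from \asses{ass1}{ass4}) remain in this region, so the estimate of~\defi{iss} obtained there is the desired one. The restriction to this region is what makes $\psi$ in~\eqref{equ2} and its gradient~\eqref{equ8} uniformly bounded, yielding the two elementary facts I would record first: $V_{p}\geq \frac{P}{r^{2}+Q}\|\tilbm{x}\|^{2}$ (since $r^{2}-\|\fbm{x}_{ij}\|^{2}+Q\leq r^{2}+Q$), and $\sum_{i=1}^{N}\tfbm{\theta}_{i}\bm{\theta}_{i}\leq c_{\theta}\|\tilbm{x}\|^{2}$ for a constant $c_{\theta}$ depending only on $P$, $Q$, $r$ and $\overline{r}$, where $\|\tilbm{x}\|^{2}=\sum_{(i,j)\in\mathcal{E}(0)}\|\fbm{x}_{ij}\|^{2}$.

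For the lower sandwich bound I would combine the kinetic term of $V_{1}$ with $V_{p}$. Property~\ref{P1} gives $\frac{1}{2(B+\sigma D)}\sum_{i}\tbm{s}_{i}\fbm{M}_{i}\fbm{s}_{i}\geq \frac{\lambda_{\min}}{2(B+\sigma D)}\sum_{i}\|\fbm{s}_{i}\|^{2}$ with $\lambda_{\min}=\min_{i}\lambda_{i1}$, while the definition~\eqref{equ6} of $\fbm{s}_{i}$ and the inequality $\|\fbm{a}+\fbm{b}\|^{2}\geq \frac{1}{2}\|\fbm{a}\|^{2}-\|\fbm{b}\|^{2}$ yield $\sum_{i}\|\fbm{s}_{i}\|^{2}\geq \frac{1}{2}\|\dbm{x}\|^{2}-\sigma^{2}c_{\theta}\|\tilbm{x}\|^{2}$. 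Since~\theo{theorem2} already forces $P$ to be large, the positive $\|\tilbm{x}\|^{2}$ term from $V_{p}$ dominates the $-\sigma^{2}c_{\theta}\|\tilbm{x}\|^{2}$ correction, so that $V_{2}\geq V_{1}\geq c_{1}\big(\|\dbm{x}\|^{2}+\|\tilbm{x}\|^{2}\big)=c_{1}\|\bm{\phi}\|^{2}$, giving $\alpha_{1}(s)=c_{1}s^{2}\in\mathcal{K}_{\infty}$. For the upper bound I would introduce the delay-dependent norm $|\bm{\phi}_{t}|_{a}$ through $|\bm{\phi}_{t}|_{a}^{2}=\|\bm{\phi}(t)\|^{2}+\sum_{i}\sum_{j\in\mathcal{N}_{i}(0)}\overline{T}_{ji}\int_{t-\overline{T}_{ji}}^{t}\|\dbm{x}_{j}(\tau)\|^{2}d\tau$, which satisfies the two-sided estimate $\gamma_{a}\|\bm{\phi}(t)\|\leq |\bm{\phi}_{t}|_{a}\leq \overline{\gamma}_{a}|\bm{\phi}_{t}|_{T}$ required in the excerpt. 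Each term of $V_{2}$ is then dominated by $|\bm{\phi}_{t}|_{a}^{2}$: the kinetic and $V_{p}$ terms by $\|\bm{\phi}(t)\|^{2}$, and the Krasovskii terms $V_{ci}$, $V_{si}$ by the history integral (bounding $e^{-\rho(t-\tau)}\leq 1$ and $\int_{t+\delta}^{t}\leq \int_{t-\overline{T}_{ji}}^{t}$), so that $V_{2}\leq \alpha_{2}(|\bm{\phi}_{t}|_{a})$ with $\alpha_{2}(s)=c_{2}s^{2}\in\mathcal{K}_{\infty}$. Retaining the decay factor $e^{-\rho\overline{T}}$ on the window, the same terms also give the matching lower bound $V_{2}\geq c_{3}|\bm{\phi}_{t}|_{a}^{2}$, which I will need for the dissipation step.

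For the dissipation condition I would reuse the estimate from~\theo{theorem2}: under the gain conditions~\eqref{equ38}--\eqref{equ39}, the bound~\eqref{equ37} together with~\lem{lem1} collapses to $\dot{V}_{2}\leq -\rho V_{2}+\frac{\|\fbm{f}\|^{2}}{B+\sigma D}$. Substituting $V_{2}\geq c_{3}|\bm{\phi}_{t}|_{a}^{2}$ and requiring $|\bm{\phi}_{t}|_{a}\geq \alpha_{4}(\|\fbm{f}(t)\|)$ with $\alpha_{4}(s)=\sqrt{2/\big(\rho c_{3}(B+\sigma D)\big)}\,s\in\mathcal{K}$ absorbs the input term into half of the decay rate, leaving $\dot{V}_{2}\leq -\frac{\rho c_{3}}{2}|\bm{\phi}_{t}|_{a}^{2}=:-\alpha_{3}(|\bm{\phi}_{t}|_{a})$ with $\alpha_{3}\in\mathcal{K}$. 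With conditions~(a) and~(b) of Lemma~\ref{L4} verified, ISS with input $\fbm{f}$ and state $\bm{\phi}$ follows, and~\defi{iss} then yields the invariance of $\mathcal{I}$ and the global attractivity of $\mathcal{A}$. I expect the main obstacle to be the sandwich bounds rather than the dissipation, which is essentially inherited from~\theo{theorem2}: the lower bound must be shown to survive the indefinite cross-coupling between the velocities $\dbm{x}_{i}$ and the position-dependent $\bm{\theta}_{i}$ hidden inside $\fbm{s}_{i}$, and the norm $|\bm{\phi}_{t}|_{a}$ must be constructed to play three roles at once---respecting the prescribed two-sided equivalence, upper-bounding $V_{2}$, and lower-bounding it up to the window decay factor---so that both items of Lemma~\ref{L4} refer to the same $|\bm{\phi}_{t}|_{a}$.
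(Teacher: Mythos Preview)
Your overall framework---verifying the two hypotheses of Lemma~\ref{L4} for the functional $V_{2}$ on the forward-invariant region provided by \theo{theorem2}, and reusing the dissipation inequality $\dot V_{2}\le -\rho V_{2}+\|\fbm{f}\|^{2}/(B+\sigma D)$ for condition~(b)---matches the paper exactly, and your treatment of the dissipation step is correct.

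There is, however, a genuine gap in your lower sandwich bound. You argue that the $V_{p}$ contribution $\frac{P}{r^{2}+Q}\|\tilbm{x}\|^{2}$ dominates the correction $-\frac{\lambda_{\min}\sigma^{2}c_{\theta}}{2(B+\sigma D)}\|\tilbm{x}\|^{2}$ \emph{because} $P$ is large. But $c_{\theta}$ itself comes from the squared gradient~\eqref{equ8} and therefore scales like $P^{2}$, so increasing $P$ makes the correction grow \emph{faster} than the $V_{p}$ term; the stated reasoning is exactly backwards, and the inequality you need is not guaranteed by the parameter choices of \theo{theorem2}. The paper sidesteps this sign competition entirely: instead of pushing $\sum_{i}\|\fbm{s}_{i}\|^{2}$ \emph{down} toward $\|\dbm{x}\|^{2}$, it pushes $\|\dbm{x}\|^{2}$ \emph{up} via $\sum_{i}\|\dbm{x}_{i}\|^{2}\le 2\sum_{i}\|\fbm{s}_{i}\|^{2}+2\sigma^{2}\sum_{i}\|\bm{\theta}_{i}\|^{2}$ together with the reverse of \lem{lem1}, namely $\sum_{i}\|\bm{\theta}_{i}\|^{2}\le \frac{4\overline{\lambda}_{L}P}{r^{2}+Q}V_{p}$ (obtained from~\eqref{equ12} with the largest eigenvalue $\overline{\lambda}_{L}$ of $\fbm{L}_{e}$). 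Since both $\sum_{i}\|\fbm{s}_{i}\|^{2}$ and $V_{p}$ appear in $V_{1}$ with \emph{positive} coefficients, a simple minimum-of-coefficients argument gives $V_{2}\ge a_{1}\|\bm{\phi}\|^{2}$ without any cancellation.

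A second, smaller difference: you construct an explicit history norm $|\bm{\phi}_{t}|_{a}$ and then must show \emph{both} $V_{2}\le c_{2}|\bm{\phi}_{t}|_{a}^{2}$ and $V_{2}\ge c_{3}|\bm{\phi}_{t}|_{a}^{2}$. The paper simply \emph{defines} $|\bm{\phi}_{t}|_{a}=\sqrt{V_{2}}$; the sandwich $a_{1}\|\bm{\phi}\|^{2}\le V_{2}\le a_{2}|\bm{\phi}_{t}|_{\overline{T}}^{2}$ then directly supplies the required $\gamma_{a},\overline{\gamma}_{a}$, condition~(a) becomes $\alpha_{2}(s)=s^{2}$, and condition~(b) reads $\dot V_{2}\le -\tfrac{\rho}{2}V_{2}$ whenever $V_{2}\ge \tfrac{2}{\rho(B+\sigma D)}\|\fbm{f}\|^{2}$. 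This shortcut removes the ``three roles at once'' difficulty you anticipated.
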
   
\begin{proof}
Let $\overline{\lambda}_{L}$ be the maximum eigenvalue of $\fbm{L}_{e}$. Then \eqref{equ12} implies that:
\begin{align*}
&\sum^{N}_{i=1}\tfbm{\theta}_{i}\bm{\theta}_{i}\leq \overline{\lambda}_{L}\cdot\overline{\fbm{x}}^\mathsf{T}\overline{\fbm{x}}=\frac{\overline{\lambda}_{L}}{2}\sum^{N}_{i=1}\sum_{j\in\mathcal{N}_{i}(0)}\|\nabla_{i}\psi(\|\fbm{x}_{ij}\|)\|^{2}\\
=&\sum^{N}_{i=1}\sum_{j\in\mathcal{N}_{i}(0)}\frac{2\overline{\lambda}_{L}P\cdot(r^{2}+Q)^{2}}{(r^{2}-\|\fbm{x}_{ij}\|^{2}+Q)^{3}}\psi(\|\fbm{x}_{ij}\|)\leq\frac{4\overline{\lambda}_{L}P}{r^{2}+Q}V_{p}\textrm{,}
\end{align*}
and the definition~\eqref{equ6} of $\fbm{s}_{i}$ leads to:
\begin{align*}
\sum^{N}_{i=1}\dtbm{x}_{i}\dbm{x}_{i}\leq &2\sum^{N}_{i=1}(\tbm{s}_{i}\fbm{s}_{i}+\sigma^{2}\tfbm{\theta}_{i}\bm{\theta}_{i})\leq \sum^{N}_{i=1}2\tbm{s}_{i}\fbm{s}_{i}+\frac{8\sigma^{2}\overline{\lambda}_{L}P}{r^{2}+Q}V_{p}\textrm{.}
\end{align*}
Thus, the Lyapunov-Krasovskii functional candidate $V_{2}$ can be lower-bounded by:
\begin{equation}\label{equ48}
\begin{aligned}
V_{2}\geq & V_{1}\geq \frac{\lambda_{1}}{2(B+\sigma D)}\sum^{N}_{i=1}\tbm{s}_{i}\fbm{s}_{i}+V_{p}\\
\geq &\min\Bigg(\frac{\lambda_{1}}{4(B+\sigma D)},\frac{r^{2}+Q}{8\sigma^{2}\overline{\lambda}_{L}P}\Bigg)\sum^{N}_{i=1}\dtbm{x}_{i}\dbm{x}_{i}
\end{aligned}
\end{equation}
where $\lambda_{1}=\min\limits_{i}(\lambda_{i1})$. Further, $\psi(\|\fbm{x}_{ij}\|)$ also implies that:
\begin{align}\label{equ49}
V_{2}\geq V_{p}\geq\sum^{N}_{i=1}\sum_{j\in\mathcal{N}_{i}(0)}\frac{P\|\fbm{x}_{ij}\|^{2}}{2(r^{2}+Q)}=\frac{P}{r^{2}+Q}\ttilbm{x}\tilbm{x}\textrm{.}
\end{align}
Together, \eqrefs{equ48}{equ49} lower-bound $V_{2}$ by:
\begin{equation}\label{equ50}
V_{2}\geq a_{1}\sum^{N}_{i=1}\dtbm{x}_{i}\dbm{x}_{i}+a_{1}\ttilbm{x}\tilbm{x}=a_{1}\|\bm{\phi}\|^{2}\textrm{,}
\end{equation}
where:
\begin{align*}
a_{1}=\min\Bigg(\frac{\lambda_{1}}{8(B+\sigma D)},\frac{r^{2}+Q}{16\sigma^{2}\overline{\lambda}_{L}P},\frac{P}{2(r^{2}+Q)}\Bigg)\textrm{.}
\end{align*}

From $\|\dbm{x}_{j}(\tau)\|\leq |\dbm{x}_{j}|_{\overline{T}}\ \forall \tau\in[t-\overline{T},t]$, it follows that:
\begin{align*}
V_{ci}\leq \sum_{j\in\mathcal{N}_{i}(0)}\overline{T}^{2}_{ji}|\dbm{x}_{j}|^{2}_{\overline{T}}\quad \text{and}\quad V_{si}\leq \frac{1}{2}\sum_{j\in\mathcal{N}_{i}(0)}\overline{T}^{3}_{ji}|\dbm{x}_{j}|^{2}_{\overline{T}}\textrm{.}
\end{align*}
The definition~\eqref{equ6}  of the sliding surface $\fbm{s}_{i}$ also implies that:
\begin{align*}
\sum^{N}_{i=1}\tbm{s}_{i}\fbm{s}_{i}\leq &2\sum^{N}_{i=1}(\dtbm{x}_{i}\dbm{x}_{i}+\sigma^{2}\tfbm{\theta}_{i}\bm{\theta}_{i})\leq\sum^{N}_{i=1}2\dtbm{x}_{i}\dbm{x}_{i}+\frac{8\sigma^{2}\overline{\lambda}_{L}P}{r^{2}+Q}V_{p}\textrm{.}
\end{align*}
Therefore, the Lyapunov-Krasovskii functional $V_{2}$ can be upper-bounded by:
\begin{equation}\label{equ51}
\begin{aligned}
V_{2}\leq &\sum^{N}_{i=1}\frac{\lambda_{2}\tbm{s}_{i}\fbm{s}_{i}}{2(B+\sigma D)}+\sum^{N}_{i=1}\sum_{j\in\mathcal{N}_{i}(0)}\frac{2\Omega\overline{T}^{2}_{ji}+\Upsilon\overline{T}^{3}_{ji}}{2(B+\sigma D)}|\dbm{x}_{j}|^{2}_{\overline{T}}+V_{p}\\
\leq &\sum^{N}_{i=1}\frac{\lambda_{2}\dtbm{x}_{i}\dbm{x}_{i}}{B+\sigma D}+\sum^{N}_{i=1}\sum_{j\in\mathcal{N}_{i}(0)}\frac{2\Omega\overline{T}^{2}_{ij}+\Upsilon\overline{T}^{2}_{ij}}{2(B+\sigma D)}|\dbm{x}_{i}|^{2}_{\overline{T}}\\
&+\Bigg(\frac{4\sigma^{2}\lambda_{2}\overline{\lambda}_{L}P}{(B+\sigma D)(r^{2}+Q)}+1\Bigg)\cdot\frac{P}{2Q}\sum^{N}_{i=1}\sum_{j\in\mathcal{N}_{i}(0)}\|\fbm{x}_{ij}\|^{2}\\
\leq &a_{2}\sum^{N}_{i=1}|\dbm{x}_{i}|^{2}_{\overline{T}}+\frac{a_{2}}{2}\sum^{N}_{i=1}\sum_{j\in\mathcal{N}_{i}(0)}\|\fbm{x}_{ij}\|^{2}\leq a_{2}|\bm{\phi}|^{2}_{\overline{T}}\textrm{,}
\end{aligned}
\end{equation} 
where $\lambda_{2}=\max\limits_{i}(\lambda_{i2})$ and:
\begin{align*}
a_{2}=\max\Bigg(\frac{\lambda_{2}}{B+\sigma D}+\sum_{j\in\mathcal{N}_{i}(0)}\frac{2\Omega\overline{T}^{2}_{ij}+\Upsilon\overline{T}^{2}_{ij}}{2(B+\sigma D)},\\
\frac{4\sigma^{2}\lambda_{2}\overline{\lambda}_{L}P^{2}}{(B+\sigma D)(r^{2}+Q)Q}+\frac{P}{Q}\Bigg)\textrm{.}
\end{align*}

Let $|\bm{\phi}_{t}|_{a}=\sqrt{V_{2}}$ with $\gamma_{a}=\sqrt{a_{1}}$ and $\overline{\gamma}_{a}=\sqrt{a_{2}}$. Then, $\alpha_{1}(\|\bm{\phi}\|)=a_{1}\|\bm{\phi}\|^{2}$ and $\alpha_{2}(|\bm{\phi}_{t}|_{a})=|\bm{\phi}_{t}|^{2}_{a}$ of class $\mathcal{K}_{\infty}$ trivially satisfy condition a) of Lemma~\ref{L4} in~\cite{Jiang2006SCL}.

Choosing control gains by~\theo{theorem2} simplifies $\dot{V}_{2}$ in~\eqref{equ37} to:
\begin{align*}
\dot{V}_{2}\leq &-\rho\sum^{N}_{i=1}\frac{\lambda_{i2}\tbm{s}_{i}\fbm{s}_{i}}{2(B+\sigma D)}-\rho\sum^{N}_{i=1}\frac{\Omega V_{ci}+\Upsilon V_{si}}{B+\sigma D}\\
&-\frac{(r^{2}+Q)\rho}{4\lambda_{L}P}\sum^{N}_{i=1}\tfbm{\theta}_{i}\bm{\theta}_{i}+\frac{\|\fbm{f}\|^{2}}{B+\sigma D}\leq -\rho V_{2}+\frac{\|\fbm{f}\|^{2}}{B+\sigma D}\textrm{.}
\end{align*} 
Finally, letting $\alpha_{3}(|\bm{\phi}_{t}|_{a})=\rho\cdot |\bm{\phi}_{t}|^{2}_{a}/2$ and:
\begin{align*}
\alpha_{4}(\|\fbm{f}\|)=\sqrt{\frac{2\|\fbm{f}\|^{2}}{\rho\cdot(B+\sigma D)}}
\end{align*}
of class $\mathcal{K}$ guarantees condition b) of Lemma~\ref{L4} in~\cite{Jiang2006SCL}, i.e., $\dot{V}_{2}\leq -\alpha_{3}(|\bm{\phi}_{t}|_{a})$ for every $|\bm{\phi}_{t}|_{a}\geq\alpha_{4}(\|\fbm{f}\|)$, and completes the proof that the time-delay swarm~\eqref{equ1} under the control~\eqref{equ30} is ISS with input $\fbm{f}$ and state $\bm{\phi}$. 
\end{proof}

The unpredictable user command $\fbm{f}$ makes a teleoperated swarm a perturbed nonlinear system with uncertain human operator dynamics, and makes input-to-state stability particularly useful for studying its robust synchronization. Compared to~\cite{Liu2018TIE}, which also investigates the input-to-state stability of semi-autonomous bilateral teleoperation, this paper maintains the connectivity of a teleoperated swarm with a tree network. To preserve connectivity, the controller must constrain the inter-slave distances $\|\fbm{x}_{ij}\|$ and, with them, the state $\bm{\phi}$ of the swarm below a prescribed threshold under the perturbation of user command $\fbm{f}$. Thus, a key contribution of the dynamic coupling and damping injection strategy in this paper compared to~\cite{Liu2018TIE} is to confine the impact of the user command on the connectivity of the time-delay slave swarm to a safe domain.

\section{Experiments}\label{sec: experiments}

This section validates the proposed dynamic strategy through experimental comparison to virtual point-based control~\cite{Lee2013TMECH}. As show in \fig{fig2}, the experimental distributed swarm teleoperation testbed includes a Novint Falcon haptic robot as the master and three Geomagic Touch haptic robots as the slave swarm. Each robot is connected to, and controlled by, a local computer running the Robot Operating System~(ROS) and MATLAB/Simulink. The human operator manipulates the master to move the end-effectors of all slaves. The task-space positions of all robot end-effectors and their inter-connections are displayed to the master for visual feedback to the operator. In all experiments, the user strives to move the master along the same path under the same driving force. The master-informed slave~$1$ connection is permanent, but slaves~$2$ and~$3$ can exchange information with slave~$1$ only when they are within $r=0.1$~m distance from it. Thus, the tree network of the slave swarm is $\mathcal{G}(0)=\{\mathcal{V},\mathcal{E}(0)\}$ with $\mathcal{V}=\{1,2,3\}$ and $\mathcal{E}(0)=\{(1,2),(1,3)\}$. Virtual point-based control~\cite{Lee2013TMECH} and the proposed dynamic control synchronize the slave swarm while preserving the interaction links $(1,2)$ and $(1,3)$ under the perturbation $\fbm{f}$. The video of all experiments is available at \url{https://youtu.be/UDAJAbRszS0}.

\begin{figure}[!hbt]
\centering
\includegraphics[width=\columnwidth]{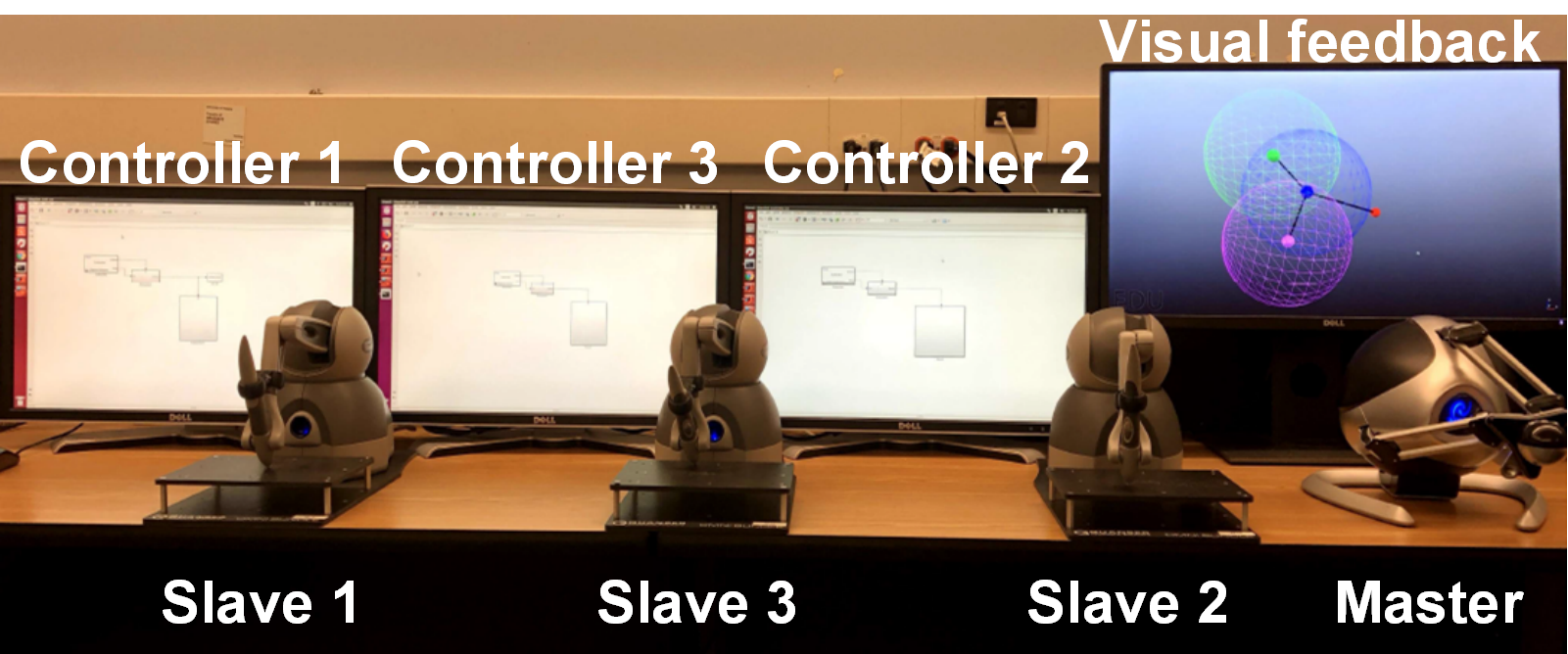}
\caption{The experimental swarm teleoperation testbed.}
\label{fig2}
\end{figure} 

\noindent\textbf{Virtual point-based control~\cite{Lee2013TMECH}}

Because the robot parameters are unknown, each slave $i$ tracks its kinematic virtual point by Proportional plus damping control $\fbm{u}_{i}=K\cdot(\fbm{p}_{i}-\fbm{x}_{i})-d\dbm{x}_{i}$, where $\fbm{p}_{i}$ is the position of the virtual point, and $K=50$ and $d=3$ are heuristically tuned to optimize the tracking performance. Connectivity preservation control is implemented on the virtual point layer as in~\cite{Lee2013TMECH}. The virtual point of slave~$1$ is connected to the master through saturated Proportional control $\fbm{f}=\Sat\left(K_{0}\cdot(\fbm{x}_{0}-\fbm{p}_{1})\right)$, where $\fbm{x}_{0}$ is the position of the master robot and $K_{0}=10$ is selected to avoid destabilizing the experimental teleoperation.
\begin{figure}[!hbt]
\centering
\includegraphics[width=\columnwidth]{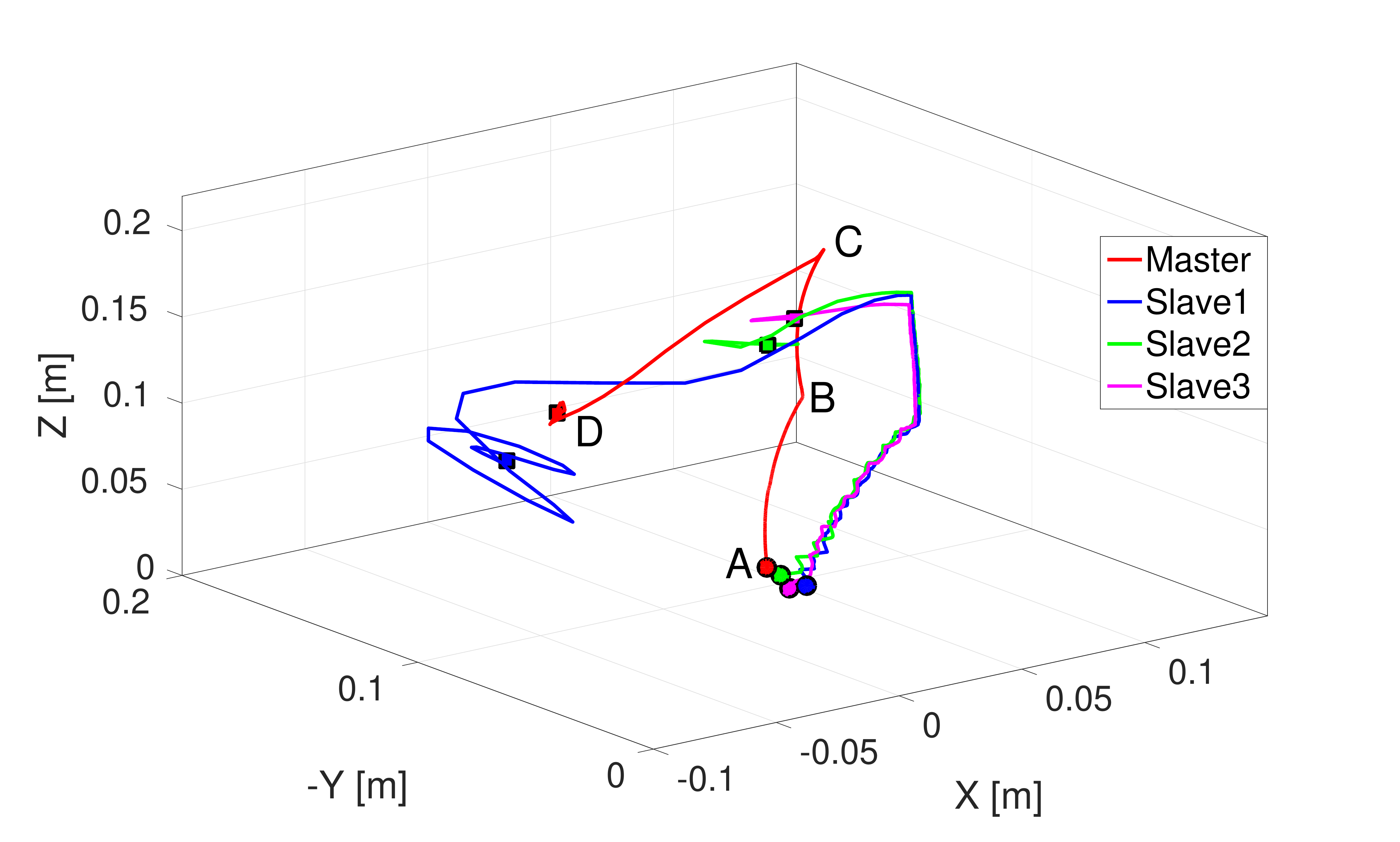}
\caption{The task-space paths of the master and all slave end-effectors under virtual point-based control~\cite{Lee2013TMECH}.}
\label{fig3}
\end{figure}

\fig{fig3} depicts the task-space paths of all robot end-effectors. Initially, the user moves the master slowly from A to B and then to C. The compliant connection $\fbm{f}$ permits the gravity compensation errors to cause relatively large position errors between the master and slaves. Nonetheless, all slaves are synchronized and follow the motion of the master. As \fig{fig4} depicts, the inter-slave distances remain smaller than $0.02$~m during the first $35$ seconds. Then, the user increases their perturbation by moving the master quickly from C to D. Because slaves~$2$ and~$3$ cannot track slave~$1$ closely enough, the inter-slave communication links are broken at about the $37.5$-th second, see~\fig{fig4}. Slaves~$2$ and~$3$ hence stop following the master while slave~$1$ moves towards D in~\fig{fig3}.
\begin{figure}[!hbt]
\centering
\includegraphics[width=\columnwidth]{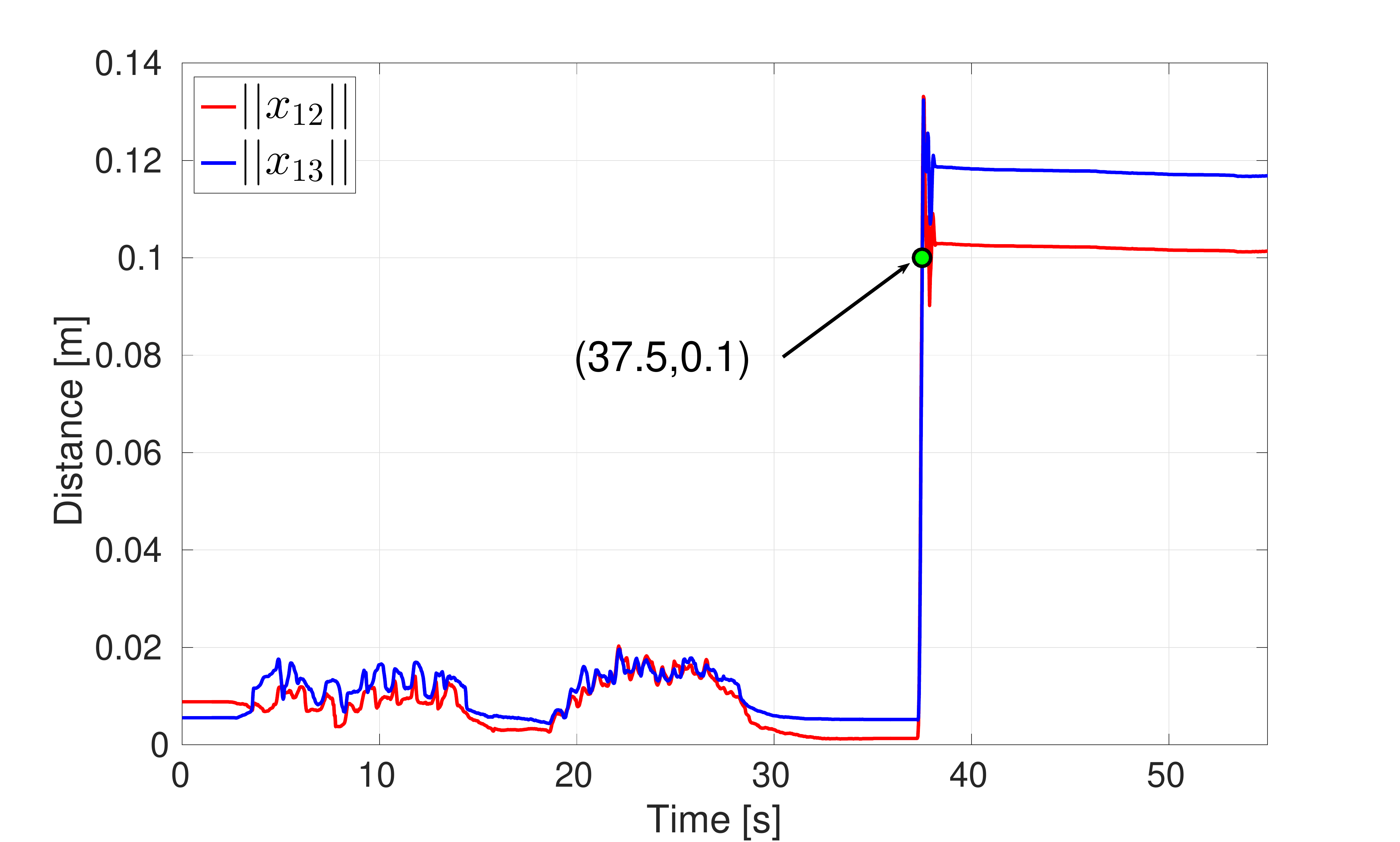}
\caption{The distances between slave $1$ and slave $2$ and $3$ under virtual point-based control~\cite{Lee2013TMECH}.}
\label{fig4}
\end{figure}

\noindent\textbf{Dynamic control~\eqref{equ13} with no delay}

In this experiment, slave~$1$ is directly connected to the master by saturated Proportional control $\fbm{f}=\Sat\left(K_{0}\cdot(\fbm{x}_{0}-\fbm{x}_{1})\right)$ with the same gain $K_{0}=10$ as in virtual point-based control. After choosing $\rho=\sigma=B=1$ and $\eta_{i}=\gamma_{i}=\zeta_{i}=0.1$ heuristically, the damping gain $D=3$ makes $\hat{D}_{i}\geq 0$. The distances between adjacent slaves being initially smaller than $\hat{r}=0.02$~m, the selection $Q=1$ guarantees $\omega_{2}=0.0093>0$. Let $P=20$ by~\eqref{equ23} with $\Delta=0.1$ and update $K_{i}(t)$ by~\eqref{equ24} with estimated $\lambda_{i2}=0.01$ to make $\omega_{3}=0.0851>0$ and complete the controller design.
\begin{figure}[!hbt]
\centering
\includegraphics[width=\columnwidth]{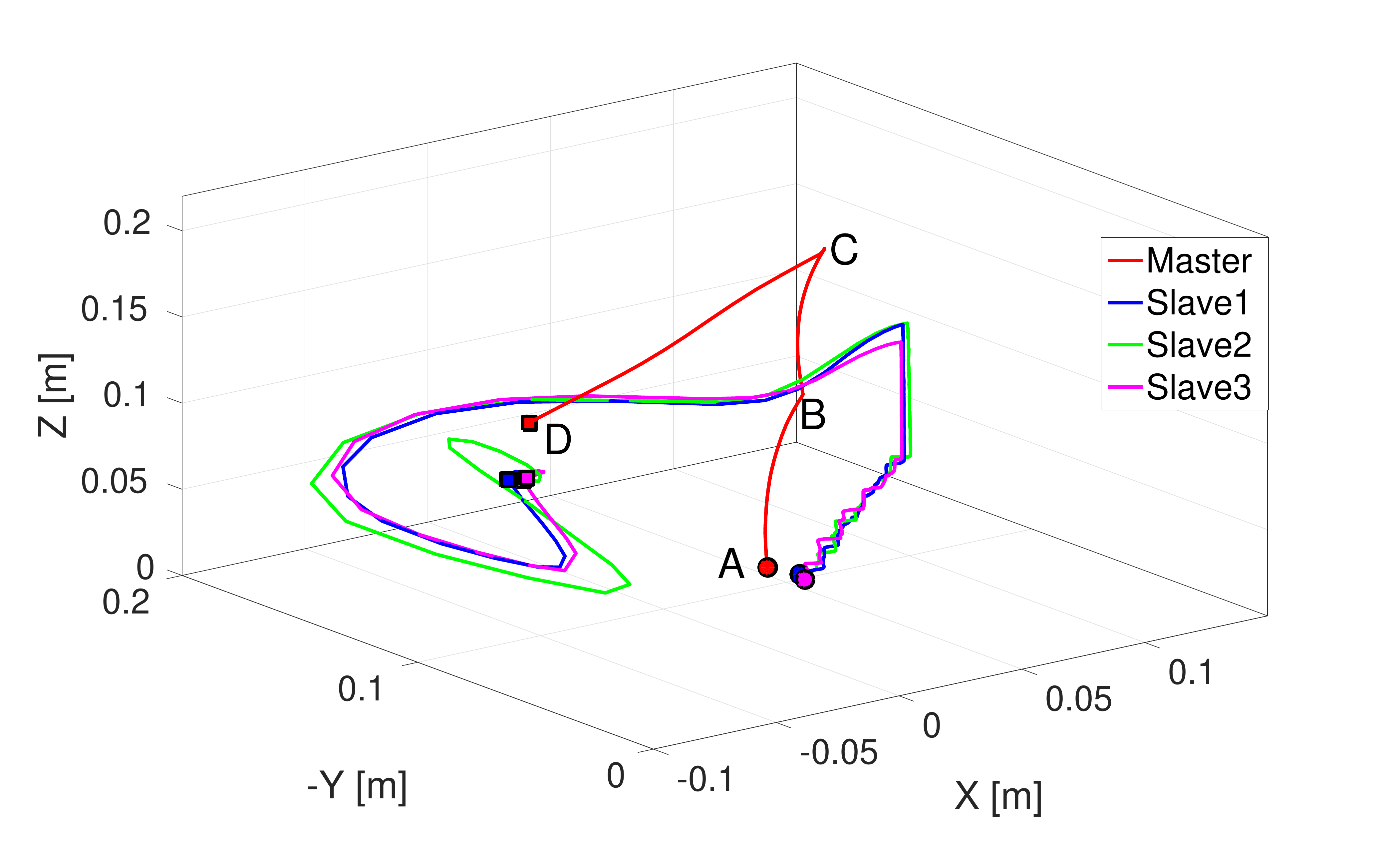}
\caption{The task-space paths of the master and all slave end-effectors under the dynamic control~\eqref{equ13} with no delay.}
\label{fig5}
\end{figure}

\fig{fig5} plots the task-space paths of all robot end-effectors under the dynamic control~\eqref{equ13}. Like virtual point-based control~\cite{Lee2013TMECH}, the dynamic control~\eqref{equ13} initially synchronizes all slaves and enables the user to guide the slave swarm by moving the master from A to C. Then, the user increases their perturbation $\fbm{f}$ during the movement from C to D. The swarm overshoots after the master stops, but all slaves asymptotically approach the master. \fig{fig6} shows that the inter-slave distances are smaller than $0.01$~m during the first $25$ seconds, and grow sharply to about $0.03$~m at about the $27$~s, when the user moves the master fast. Nonetheless, the tree connectivity is maintained because inter-slave distances remain much smaller than $r=0.1$~m.    
\begin{figure}[!hbt]
\centering
\includegraphics[width=\columnwidth]{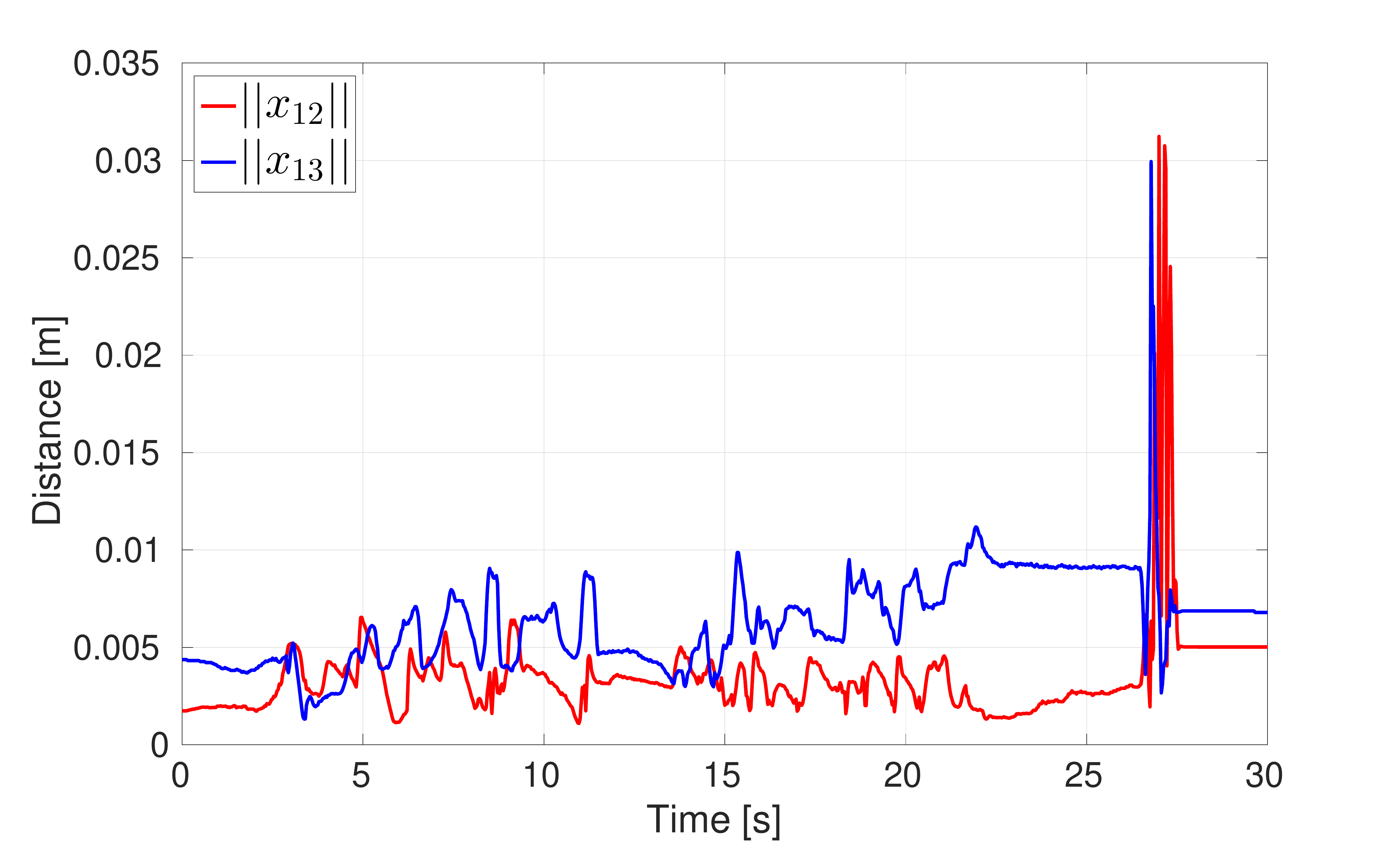}
\caption{The distances between slave $1$ and slave $2$ and $3$ under the dynamic control~\eqref{equ13} with no delay.}
\label{fig6}
\end{figure}

\noindent\textbf{Dynamic control~\eqref{equ30} with time-varying delays}

For time-varying delays bounded by $\overline{T}=0.01$~s and given $\epsilon=0.08$~m and $\hat{r}=0.02$~m, the choice $\kappa=0.5$ makes $\overline{r}=0.06$~m and thus $\omega_{1}=2.944\times 10^{-5}>0$. After selecting $\overline{\sigma}=1$ and $B=1$ heuristically, the choice $Q=1$ makes $\omega_{2}=0.0028>0$. Because $\overline{\rho}_{i}\leq 177.15$, the selections $P=15$ and $\sigma=0.1268$ satisfy condition~\eqref{equ41} with $\overline{\Delta}=0.035$. Then let $\eta_{i}=\gamma_{i}=\zeta_{i}=0.1$ and update $K_{i}(t)$ by~\eqref{equ47}. Given estimated $\lambda_{i2}=0.01$ and $c_{i}=0.01$, it follows that $\overline{\Lambda}_{ij1}\leq 0.237$, $\overline{\Lambda}_{ij2}\leq 30.9$, $\overline{K}_{i}\leq 2.515$ and, by~\eqref{equ46}, that $\Upsilon=3696.1$. Sufficient damping $D=10$ guarantees $\widetilde{D}_{i}>0$ by~\eqref{equ40} and completes the controller design. 
\begin{figure}[!hbt]
\centering
\includegraphics[width=\columnwidth]{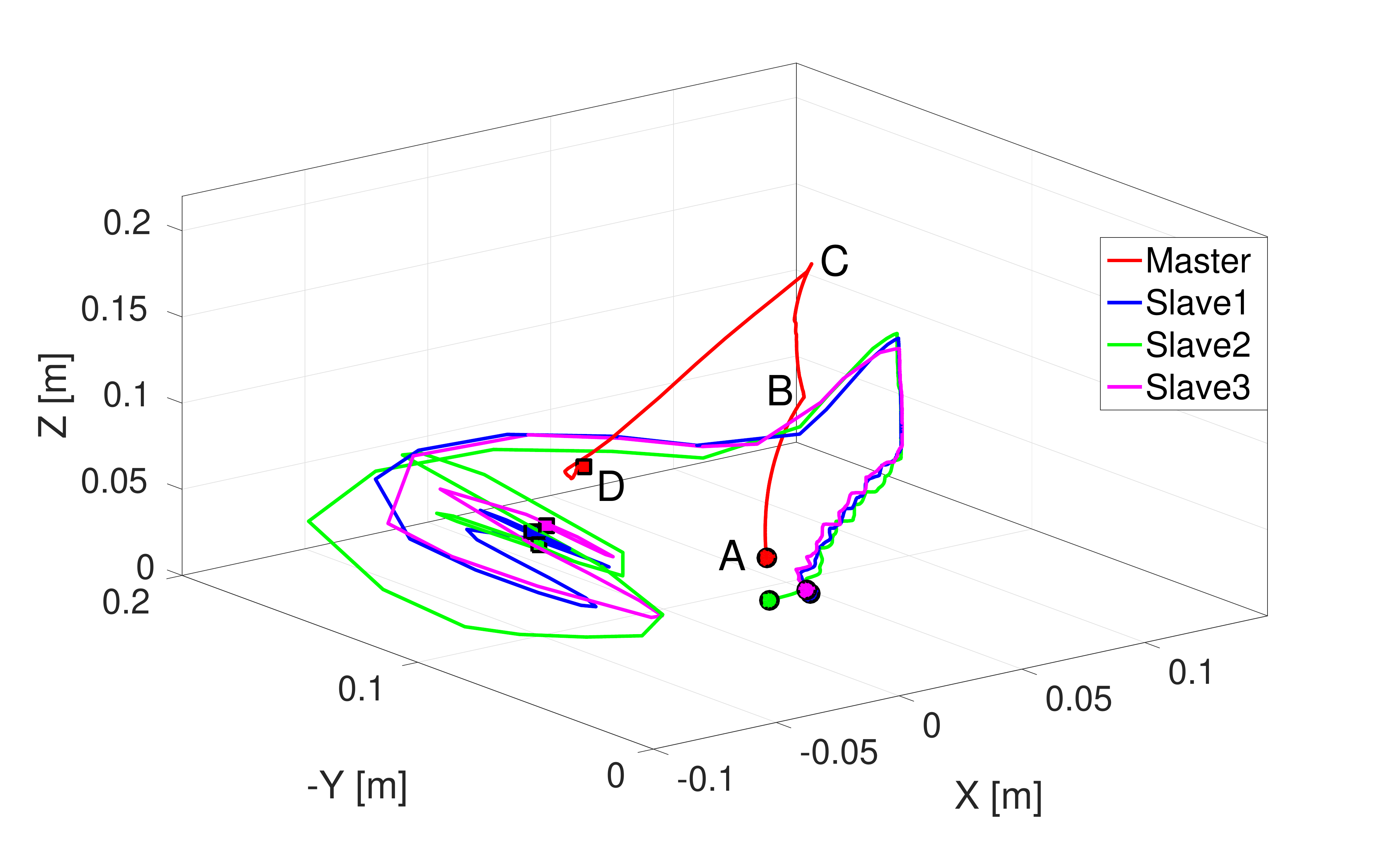}
\caption{The task-space paths of the master and all slave end-effectors under the dynamic control~\eqref{equ30} with time-varying delays up to $\overline{T}=0.01$~s.}
\label{fig7}
\end{figure}

\fig{fig7} shows the task-space paths of all robot end-effectors under the dynamic control~\eqref{equ30}. The impact of the time-varying delays is not very noticeable when the master moves from A to C slowly and all slaves are coordinated tightly. However, the delays disturb the inter-slave connections significantly when the user drives the master from C to D quickly. The slave swarm vibrates more than in \fig{fig5}, especially in the final phase of the movement. However, the dynamic control~\eqref{equ30} preserves all inter-slave connections and the swarm converges to the master.
\begin{figure}[!hbt]
\centering
\includegraphics[width=\columnwidth]{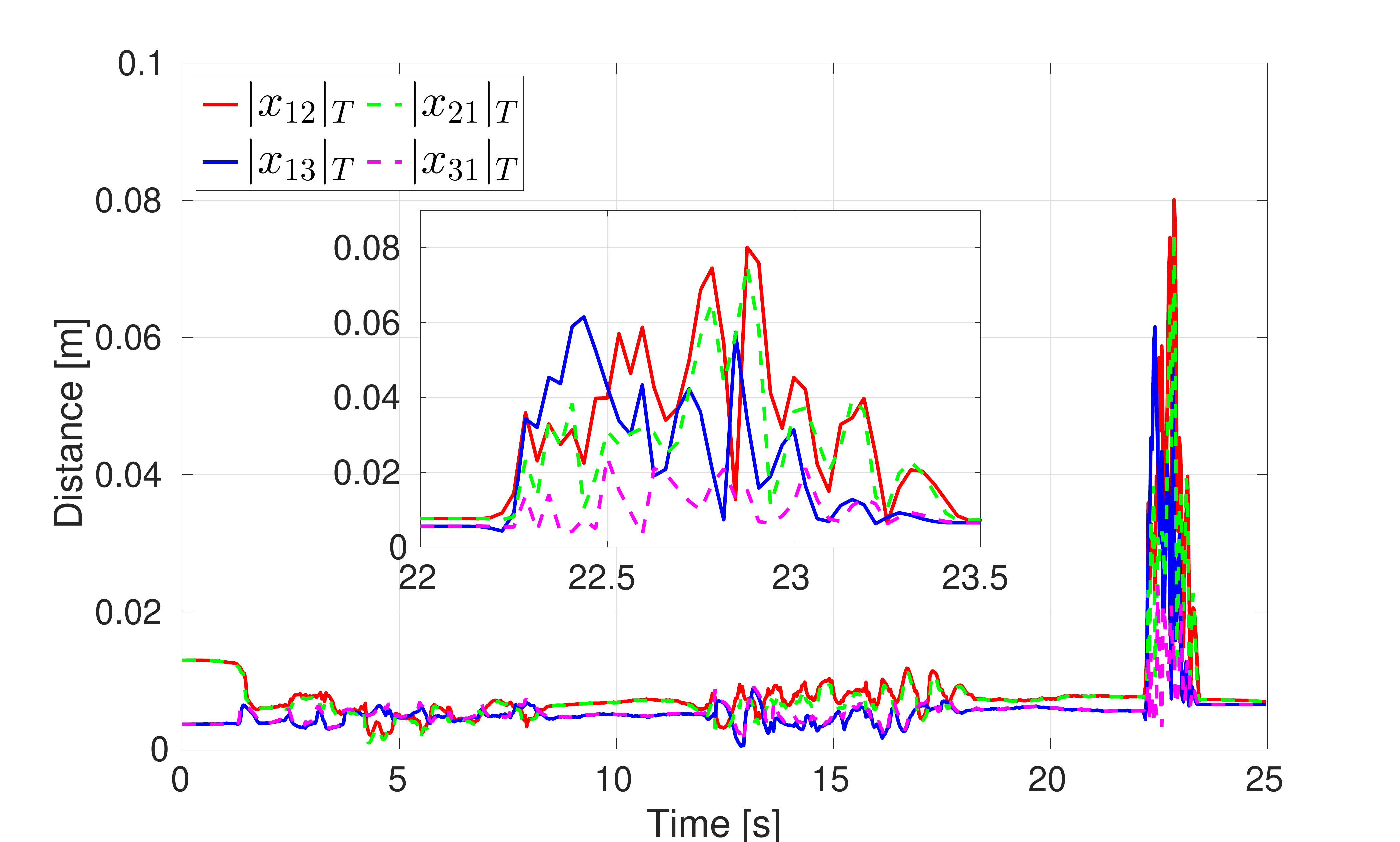}
\caption{The mismatched distances between slave $1$ and slaves $2$ and $3$ under the dynamic control~\eqref{equ30} with time-varying delays up to $\overline{T}=0.01$~s.}
\label{fig8}
\end{figure}

\fig{fig8} plots the mismatched inter-slave distances. They are strictly smaller than $0.02$~m during the first $20$ seconds when all slaves move slowly, and increase significantly at about the $23$-rd second, when the master moves fast. Nonetheless, the dynamic control~\eqref{equ30} guarantees that the initially adjacent slaves remain closer than $r=0.1$~m.  

\section{Conclusions}

This paper has proposed a constructive strategy that maintains the initial tree connectivity of a teleoperated swarm in the presence of unpredictable operator commands and time-delay inter-slave communications. The strategy modulates the inter-swarm couplings and the damping injected to each slave based on a customized potential. After sliding surfaces reduce the order of the swarm dynamics, this potential uses the inter-slave distances to quantify the impact of state-dependent mismatches on the connectivity of the swarm. Therefore, it serves to design an explicit gain updating law that suppresses the impact of the mismatches. Lyapunov-based set invariance analysis proves that the proposed explicit gain updating law limits the impact of the operator command and preserves the initial connectivity of a delay-free swarm. Further augmentation with stricter selection of control gains makes the design robust to time-varying delays in inter-slave transmissions. The paper also establishes the input-to-state stability of a teleoperated time-delay swarm under the proposed dynamic control. Future research will consider connectivity-preserving swarm teleoperation with limited actuation and heterogeneous communication radius.

\bibliography{IEEEabrv,Cooperation}
\end{document}